 \newtheorem{thm}{Theorem}[section]
 \newtheorem{prop}[thm]{Proposition}
 \theoremstyle{definition}
 \newtheorem{definition}[thm]{Definition}
 \theoremstyle{remark}
 \newtheorem{example}[thm]{Example}
 \numberwithin{equation}{section}
\newcommand{\CoinX}[1]{C_0^\infty({#1})}
\newcommand{\Af}{{\mathscr{A}}}
\newcommand{\If}{{\mathscr{I}}}
\newcommand{\supp}{{\rm supp}\,}
\newcommand{\CC}{{\mathbb C}}
\newcommand{\RR}{{\mathbb R}}
\newcommand{\NN}{{\mathbb N}}
\newcommand{\ZZ}{{\mathbb Z}}
\newcommand{\Cb}{{\boldsymbol{C}}}
\newcommand{\Db}{{\boldsymbol{D}}}
\newcommand{\Ib}{{\boldsymbol{I}}}
\newcommand{\Lb}{{\boldsymbol{L}}}
\newcommand{\Mb}{{\boldsymbol{M}}}
\newcommand{\Nb}{{\boldsymbol{N}}}
\newcommand{\nto}{\stackrel{.}{\to}}
\newcommand{\nlto}{\stackrel{.}{\longrightarrow}}
\newcommand{\id}{{\rm id}}
\newcommand{\II}{{\mathbf{1}}}
\newcommand{\Ts}{{\sf T}}
\newcommand{\Funct}{{\sf Fun}}
\newcommand{\Loc}{{\sf Loc}}
\newcommand{\LCT}{{\sf LCT}}
\newcommand{\Sys}{{\sf Sys}}
\newcommand{\Alg}{{\sf Alg}}
\newcommand{\CAlg}{{\sf C^*\hbox{-}Alg}}
\newcommand{\Phys}{{\sf Phys}}
\newcommand{\Sympl}{{\sf Sympl}}
\newcommand{\Ac}{{\mathcal A}}
\newcommand{\Bc}{{\mathcal B}}
\newcommand{\KK}{{\mathscr K}}
\newcommand{\Bf}{{\mathscr B}}
\newcommand{\Lf}{{\mathscr L}}
\newcommand{\Tf}{{\mathscr T}}
\newcommand{\Aut}{{\rm Aut}}
\newcommand{\dvol}{d{\rm vol}}
\newcommand{\LL}{\mathcal{L}}
\newcommand{\hb}{\boldsymbol{h}}
\newcommand{\fb}{\boldsymbol{f}}
\newcommand{\rce}{\text{rce}}
\newcommand{\kin}{\text{kin}}
\newcommand{\dyn}{\text{dyn}}
\begin{document}

\title[The same physics in all spacetimes]{On the notion of `the same physics \\ in all spacetimes'}
\author[CJ Fewster]{Christopher J Fewster}
\address{Department of Mathematics, University of York, Heslington, York YO10 5DD, United Kingdom}
\email{chris.fewster@york.ac.uk}
\date{\today}

\begin{abstract}
Brunetti, Fredenhagen and Verch (BFV) have shown how the notion
of local covariance for quantum field theories can be formulated 
in terms of category theory: a theory being described as
a functor from a category of spacetimes to a category of $(C)^*$-algebras. We discuss whether this condition is sufficient to guarantee that a theory represents `the same physics' in all spacetimes, giving examples to show that it does not. 
A new criterion, {\em dynamical locality}, is formulated, which requires that descriptions of local physics based
on kinematical and dynamical considerations should coincide. 
Various applications are given, including a proof that dynamical locality for quantum fields is incompatible with the possibility of covariantly choosing a  preferred state in each spacetime. 

As part of this discussion we state a precise
condition that should hold on any class of theories each representing the 
same physics in all spacetimes. This condition holds for the dynamically local theories but is violated by the full class of locally covariant theories in the BFV sense.

The majority of results described form part of forthcoming papers with
Rainer Verch~\cite{FewVer:dynloc}.
\end{abstract}
\maketitle

\section{Introduction}

This contribution is devoted to the issue of how a physical theory should be formulated in arbitrary
spacetime backgrounds in such a way that the physical content is preserved. 
Our motivation arises from various directions. First, it is essential for the extension of axiomatic quantum field theory to curved spacetimes.
Second, one would expect that any quantization of gravity coupled to matter should, in certain
regimes, resemble a common theory of matter on different fixed backgrounds. Third, as our universe appears to
be well-described by a curved spacetime on many scales, one may well wonder what physical relevance should
be ascribed to a theory that could only be formulated in Minkowski space. 

In Lagrangian theories, there is an apparently satisfactory answer to our question: namely that the  Lagrangian should transform covariantly under coordinate
transformations. Actually, even here there are subtleties, as a simple example reveals. Consider
the nonminimally coupled scalar field with Lagrangian density
\[
\LL = \frac{1}{2}\sqrt{-g}\left(\nabla^a \phi\nabla_a\phi - \xi R \phi^2\right).
\]
In Minkowski space the equation of motion does not depend on the coupling constant $\xi$ but theories
with different coupling constant can still be distinguished by their stress-tensors, which contain terms proportional to 
$\xi$. Suppose, however, that $\xi R$ is replaced by $\zeta(R)$, where $\zeta$ is a smooth function vanishing in a neighbourhood
of the origin and taking a constant value $\xi_0\neq 0$ for all sufficiently large $|R|$. 
This gives a new theory that coincides with the $\xi=0$ theory in Minkowski space in 
terms of the field equation, stress-energy tensor and any other quantity formed by functional differentiation
of the action and then evaluated in Minkowski space. 
But in de Sitter space (with sufficiently large cosmological constant) the theory coincides, in the same sense, 
with the $\xi=\xi_0$ theory. Of course, it is unlikely that a theory of this type would be physically relevant, but the example serves to illustrate that covariance of the Lagrangian does not guarantee that the physical content of a 
theory will be the same in all spacetimes. Additional conditions would be required;
perhaps that the (undensitized) Lagrangian should depend analytically on the metric and curvature quantities. 

The situation is more acute when one attempts to generalize axiomatic quantum field theory to the
curved spacetime context. After all, these approaches do not take a classical action as their starting point,
but focus instead on properties that ought to hold for any reasonable quantum field theory. In Minkowski space, Poincar\'e covariance
and the existence of a unique invariant vacuum state obeying the spectrum condition provide strong constraints which ultimately account to a large part for the successes of axiomatic
QFT in both its Wightman--G{\aa}rding~\cite{StreaterWightman} and Araki--Haag--Kastler~\cite{Haag} formulations. As a generic spacetime
has no symmetries, and moreover attempts to define distinguished vacuum states in 
general curved spacetimes lead to failure even for free fields,\footnote{Theorem~\ref{thm:natural_states} below strengthens this to a general no-go result.} there are severe difficulties associated with 
generalizing the axiomatic setting to curved spacetime.

Significant progress was made by Brunetti, Fredenhagen and Verch \cite{BrFrVe03} (henceforth BFV) who formalized the
idea of a locally covariant quantum field theory in the language of category theory (see~\cite{MacLane,AdamekHerrlichStrecker} as general references). As we will see, 
this paper opens up new ways to analyze physical theories; in this sense it justifies its
subtitle, `A new paradigm for local quantum physics'. At the structural level, 
the ideas involved have led to a number of new model-independent results for QFT 
in curved spacetime such as the spin-statistics connection~\cite{Verch01}, 
Reeh--Schlieder type results~\cite{Sanders_ReehSchlieder}
and the analysis superselection sectors~\cite{Br&Ru05,BrunettiRuzzi_topsect}; they were also crucial in completing the perturbative construction
of interacting QFT in curved spacetime \cite{BrFr2000,Ho&Wa01,Ho&Wa02}. It should also be mentioned that antecedants of the ideas underlying the BFV formalism can be found in~\cite{Fulling73,Kay79,Dimock1980}. However, we stress that the BFV approach is not
simply a matter of formalism: it suggests and facilitates new calculations that can lead to
concrete physical predictions such as {\em a priori} bounds on Casimir energy densities~\cite{Few&Pfen06, Fewster2007} and new viewpoints in cosmology~\cite{DapFrePin2008,DegVer2010} 
(see also Verch's contribution to these proceedings~\cite{VerchRegensburg}). 

The focus in this paper is on the physical content of local covariance in the BFV
formulation: in particular, is it sufficiently strong to enforce the same physics in all spacetimes?
And, if not, what does? We confine ourselves here to the main ideas, referring to 
forthcoming papers~\cite{FewVer:dynloc} for the details.

\section{Locally covariant physical theories}

To begin, let us consider what is required for a local, causal description of physics. A minimal list might 
be the following:
\begin{itemize}
\item Experiments can be conducted over a finite timespan and spatial extent in reasonable isolation from the rest of the world.
\item We may distinguish a `before' and an `after'  for such experiments. 
\item The `same' experiment could, in principle, be conducted at other locations and times with the `same' outcomes (to within experimental accuracy, and possibly in a statistical sense). 
\item The theoretical account of such experiments should be independent (as far as possible) of the rest of the world. 
\end{itemize}
Stated simply, we should not need to know where in the Universe our laboratory is, nor whether the
Universe has any extent much beyond its walls (depending on the duration of the experiment
and the degree of shielding the walls provide). Of course, these statements contain a number of undefined terms; and, when referring to the `same' experiment, we should take into account the motion of the apparatus relative to local inertial frames 
(compare a Foucault pendulum in Regensburg (latitude $49^\circ$N) with one in York ($54^\circ$N)). Anticipating Machian objections to the last requirement, we assume that sufficient structures are present to permit identification of (approximately) inertial frames.

If spacetime is modelled as a Lorentzian manifold, the minimal requirements can be met by restricting
to the globally hyperbolic spacetimes.
Recall that an orientable and time-orientable Lorentzian manifold $\Mb$ (the symbol $\Mb$ will encompass the manifold, choice of metric,\footnote{We adopt signature $+-\cdots-$.}
orientation and time-orientation) is said to be globally hyperbolic if
it has no closed causal curves and $J_\Mb^+(p)\cap J_\Mb^-(q)$ is compact for all $p,q\in \Mb$ (see~\cite[Thm 3.2]{Bernal:2006xf} for
equivalence with the older definition~\cite{HawkingEllis}).\footnote{Here, $J^\pm_\Mb(p)$ 
denotes the set of points that can be reached by future ($+$) or past ($-$) directed causal curves
originating from a point $p$ in $\Mb$ (including $p$ itself). If $S$ is a subset of $\Mb$, we write $J^\pm(S)$ for the union $J^\pm(S)=\bigcup_{p\in S} J^\pm_\Mb(p)$, and $J_\Mb(S)=J_\Mb^+(S)\cup
J_\Mb^-(S)$.} Compactness of these regions
ensures that experiments can be conducted within bounded spacetime regions,
which could in principle be shielded, and that communication within the region can be achieved by
finitely many radio links of finite range. Technically, of course, the main use of global hyperbolicity is 
that it guarantees well-posedness of the Klein--Gordon equation and other field equations with metric principal part (as described, for example, in Prof.~B\"ar's contribution to this workshop).
Henceforth we will assume that all spacetimes are globally hyperbolic and have at most finitely many connected components.

The globally hyperbolic spacetimes constitute the objects of a category $\Loc$, in 
which the morphisms are taken to be hyperbolic embeddings.
\begin{definition} 
A \emph{hyperbolic embedding} of $\Mb$ into $\Nb$ is an
isometry $\psi:\Mb\to\Nb$ preserving time and space orientations and such that 
$\psi(\Mb)$ is a causally convex\footnote{A subset $S$ of $\Nb$ is
causally convex if every causal curve in $\Nb$ with endpoints in $S$ is contained wholly in $S$.} (and hence globally hyperbolic) subset of $\Nb$.
\end{definition}

A hyperbolic embedding of $\psi:\Mb\to\Nb$ allows us to regard $\Nb$ as
an enlarged version of $\Mb$: any experiment taking place in $\Mb$
should have an analogue in $\psi(\Mb)$ which should yield
indistinguishable results -- at least if `physics is the same' in both spacetimes.

BFV implemented this idea by regarding a physical theory as a functor from
the category $\Loc$ to a category $\Phys$, which encodes the type of physical system under
consideration: the objects representing systems and the morphisms representing embeddings
of one system in a larger one. BFV were
interested in quantum field theories, described in terms of their algebras of
observables. Here, natural candidates for $\Phys$ are provided by $\Alg$ (resp., $\CAlg$), 
whose objects are unital $(C)^*$-algebras with unit-preserving injective $*$-homomorphisms
as the morphisms. However, the idea applies more widely. For example, in the context of 
linear Hamiltonian systems, a natural choice of $\Phys$ would be
the category $\Sympl$ of real symplectic spaces with symplectic maps as morphisms. As a more
elaborate example, take as 
objects of a category $\Sys$ all pairs $(\Ac,S)$ where $\Ac\in\Alg$ (or $\CAlg$) and $S$ is a nonempty
convex subset of the states on $\Ac$, closed under operations induced by 
$\Ac$.\footnote{That is, if $\sigma\in S$ and $A\in\Ac$ has $\sigma(A^*A)=1$, then
$\sigma_A(X):=\sigma(A^*X A)$ defines $\sigma_A\in S$.} Whenever $\alpha:\Ac\to\Bc$ is an $\Alg$-morphism such that
$\alpha^*(T)\subset S$, we will say that $\alpha$ induces a morphism $(\Ac,S)\to (\Bc,T)$ in $\Sys$.
It may be shown that $\Sys$ is a category under the composition inherited from $\Alg$. This category
provides an arena for discussing algebras of observables equipped with a distinguished class of states. 
The beauty of the categorical description is that it allows us to treat different types of physical
theory in very similar ways. 

According to BFV, then, a theory should assign to each spacetime $\Mb\in\Loc$ a mathematical object  $\Af(\Mb)$ modelling `the physics on $\Mb$'  and to each hyperbolic embedding $\psi:\Mb\to\Nb$ 
a means of embedding the physics on $\Mb$ inside the physics on $\Nb$, expressed as
a morphism $\Af(\psi):\Af(\Mb)\to\Af(\Nb)$. The natural conditions that 
$\Af(\id_\Mb)=\id_{\Af(\Mb)}$ and $\Af(\varphi\circ\psi) = \Af(\varphi)\circ\Af(\psi)$ are
precisely those that make $\Af$ a (covariant) functor. 

In particular, this gives an immediate definition of the `local physics' in any nonempty open
globally hyperbolic subset $O$ of $\Mb$ as
\[
\Af^\kin(\Mb;O) := \Af(\Mb|_O), 
\]
where $\Mb|_O$ denotes the region $O$ with geometry induced from $\Mb$
and considered as a spacetime in its own right; this embeds in $\Af(\Mb)$ by
\[
\Af^\kin(\Mb;O) \xrightarrow{\Af(\iota_{\Mb;O})} \Af(\Mb),
\]
where $
\iota_{\Mb;O}:\Mb|_O\to \Mb$ is the obvious embedding. 
Anticipating future developments, we refer to this as a {\em kinematic} description of
the local physics. 

The kinematic description has many nice properties. Restricted to Mink{\-}owski space, and with $\Phys=\CAlg$, BFV were able to show that the net $O\mapsto \Af^\kin(\Mb;O)$ 
(for nonempty open, relatively compact, globally hyperbolic $O$) satisfies the Haag--Kastler 
axioms of algebraic QFT.\footnote{More precisely, BFV study the images 
$\Af(\iota_{\Mb;O})(\Af^\kin(\Mb;O))$ as sub-$C^*$-algebras of $\Af(\Mb)$.} 
In general spacetimes, it therefore provides a generalisation of the Haag--Kastler framework. 
It is worth focussing on one particular issue. Dimock~\cite{Dimock1980} also articulated a 
version of Haag--Kastler axioms for curved spacetime QFT and also expressed this partly in functorial language. However, Dimock's covariance axiom was global in nature: it required that when
spacetimes $\Mb$ and $\Nb$ are isometric then there should be an isomorphism between the 
nets of algebras on $\Mb$ and $\Nb$. In the BFV framework this idea is localised and 
extended to situations in which $\Mb$ is hyperbolically embedded in $\Nb$ but not necessarily 
globally isometric to it. To be specific, suppose that there is a morphism $\psi:\Mb\to\Nb$
and that $O$ is a nonempty, open globally hyperbolic subset of $\Mb$ with finitely many 
connected components. Then $\psi(O)$ also obeys these conditions in $\Nb$ and, moreover,
restricting the domain of $\psi$ to $O$ and the codomain to $\psi(O)$, we obtain an 
isomorphism $\hat{\psi}:\Mb|_O\to\Nb|_{\psi(O)}$ making the diagram
\[
\begin{tikzpicture}[baseline=0 em,description/.style={fill=white,inner sep=2pt}]
\matrix (m) [ampersand replacement=\&,matrix of math nodes, row sep=3em,
column sep=2.5em, text height=1.5ex, text depth=0.25ex]
{ \Mb|_O \&  \Nb|_{\psi(O)} \\
 \Mb \& \Nb\\ };
\path[->,font=\scriptsize]
(m-1-1) edge node[left] {$\iota_{\Mb;O}$} (m-2-1)
(m-1-2) edge node[auto] {$\iota_{\Nb;\psi(O)}$} (m-2-2)
(m-2-1) edge node[auto] {$\psi$} (m-2-2);
\path[->,font=\scriptsize]
(m-1-1) edge node[above]{$\hat{\psi}$}  node[below] {$\cong$} (m-1-2);
\end{tikzpicture}
\]
commute. As functors always map commuting diagrams to commuting diagrams, and isomorphisms to isomorphisms, we obtain the commuting diagram
\begin{equation}
\begin{tikzpicture}[baseline=0 em,description/.style={fill=white,inner sep=2pt}]
\matrix (m) [ampersand replacement=\&,matrix of math nodes, row sep=3em,
column sep=2.5em, text height=1.5ex, text depth=0.25ex]
{ \Af^\kin(\Mb;O) \&  \Af^\kin(\Nb;\psi(O)) \\
 \Af(\Mb) \&  \Af(\Nb)\\ };
\path[->,font=\scriptsize]
(m-1-1) edge node[left] {$\alpha^\kin_{\Mb;O}$} (m-2-1)
(m-1-2) edge node[auto] {$\alpha^\kin_{\Nb;\psi(O)}$} (m-2-2)
(m-2-1) edge node[auto] {$\Af(\psi)$} (m-2-2);
\path[->,font=\scriptsize]
(m-1-1) edge node[above] {$\Af(\hat{\psi})$} node[below]  {$\cong$} (m-1-2);
\end{tikzpicture}\label{eq:kinematic_covariance_diagram}
\end{equation}
in which, again anticipating future developments, we have written $\alpha^\kin_{\Mb;O}$ for the
morphism $\Af(\iota_{\Mb;O})$ embedding $\Af^\kin(\Mb;O)$ in $\Af(\Mb)$. The significance of this
diagram is that it shows that the kinematic description of local physics is truly local: it assigns
equivalent descriptions of the physics to isometric subregions of the ambient spacetimes $\Mb$ and $\Nb$. 
This holds even when there is no hyperbolic embedding of one of these ambient spacetimes in the other, as can be seen if we consider a further morphism $\varphi:\Mb\to\Lb$, thus obtaining an isomorphism
$\Af(\hat{\psi})\circ \Af(\hat{\varphi}^{-1}): \Af^\kin(\Lb;\varphi(O))\to \Af^\kin(\Nb;\psi(O))$
even though there need be no morphism between $\Lb$ and $\Nb$.

\paragraph{Example: the real scalar field}
Take $\Phys=\Alg$.
The quantization of the Klein--Gordon equation $(\Box_\Mb+m^2)\phi=0$ is well understood in arbitrary
globally hyperbolic spacetimes. To each spacetime $\Mb\in\Loc$ we assign $\Af(\Mb)\in\Alg$
with generators $\Phi_\Mb(f)$, labelled by test functions $f\in\CoinX{\Mb}$ and interpreted as
smeared fields, subject to the following relations (which hold for arbitrary $f,f'\in\CoinX{\Mb}$):
\begin{itemize}
\item $f\mapsto\Phi_\Mb(f)$ is complex linear
\item $\Phi_\Mb(f)^*=\Phi_\Mb(\overline{f})$
\item $\Phi_\Mb((\Box_\Mb+m^2)f) = 0$
\item $[\Phi_\Mb(f),\Phi_\Mb(f')] = iE_\Mb(f,f')\II_{\Af(\Mb)}$.
\end{itemize}
Here, $E_\Mb$ is the advanced-minus-retarded Green function
for $\Box_\Mb+m^2$, 
whose existence is guaranteed by global hyperbolicity of $\Mb$.
Now if  $\psi:\Mb\to\Nb$ is a hyperbolic embedding 
we have
\begin{equation}\label{eq:intertwine}
\psi_* \Box_\Mb f = \Box_\Nb \psi_* f \quad\textrm{and}\quad E_\Nb(\psi_* f,\psi_* f') = E_\Mb(f,f')
\end{equation}
for all $f,f'\in\CoinX{\Mb}$,
where 
\[
(\psi_* f)(p) =\begin{cases} f(\psi^{-1}(p)) & p\in\psi(\Mb)\\
0 & \text{otherwise.}
\end{cases}
\]
The second assertion in~\eqref{eq:intertwine} follows from the first, together with the uniqueness of advanced/retarded solutions to the
inhomogeneous Klein--Gordon equation. In consequence, the map
\[
\Af(\psi)(\Phi_\Mb(f)) = \Phi_\Nb(\psi_* f), \quad \Af(\psi)\II_{\Af(\Mb)} = \II_{\Af(\Nb)}
\]
extends to a $*$-algebra homomorphism $\Af(\psi):\Af(\Mb)\to\Af(\Nb)$. Furthermore, 
this is a monomorphism because $\Af(\Mb)$ is simple, so $\Af(\psi)$ is indeed a morphism
in $\Alg$. It is clear that $\Af(\psi\circ\varphi)=\Af(\psi)\circ\Af(\varphi)$ and $\Af(\id_\Mb)=\id_{\Af(\Mb)}$ because these equations hold on the generators, so
$\Af$ indeed defines a functor from $\Loc$ to $\Alg$. Finally, if $O$ is a nonempty open
globally hyperbolic subset of $\Mb$, $\Af^\kin(\Mb;O)$
is defined as $\Af(\Mb|_O)$; its image under $\Af(\iota_{\Mb;O})$
may be characterized as the unital subalgebra of $\Af(\Mb)$ generated by those
$\Phi_\Mb(f)$ with $f\in\CoinX{O}$.

\section{The time-slice axiom and relative Cauchy evolution}

The structures described so far may be regarded as kinematic.
To introduce dynamics we need a replacement for the idea that
evolution is determined by data on a Cauchy surface. With this
in mind, we say that a hyperbolic embedding $\psi:\Mb\to\Nb$ is \emph{Cauchy} if $\psi(\Mb)$ contains a
Cauchy surface of $\Nb$ and say that a locally covariant theory $\Af$ satisfies the \emph{time-slice axiom}
if $\Af(\psi)$ is an isomorphism whenever $\psi$ is Cauchy. 
The power of the time-slice axiom arises as follows. Any Cauchy surface naturally inherits an orientation from the ambient spacetime; if two spacetimes $\Mb$ and $\Nb$ have Cauchy surfaces that are equivalent modulo an
orientation-preserving diffeomorphism (not necessarily an isometry) then the two spacetimes may be linked by a chain of Cauchy
morphisms \cite{FullingNarcowichWald} (see Fig.~\ref{fig:deform}). If the time-slice axiom is satisfied,
then the theory $\Af$ assigns isomorphisms to each Cauchy morphism; this can often be used to infer that the theory on $\Nb$ obeys some property, given that it holds in $\Mb$.  
\begin{figure}
\begin{center}
\begin{tikzpicture}[scale=0.6]
\definecolor{orange}{rgb}{1,0.5,0}
\draw[fill=lightgray] (-7,2) +(-1,0) -- +(0,-1) -- +(1,0) -- +(0,1) -- cycle;
\draw[fill=lightgray] (-3,2) +(-1,0) -- +(0,1) -- +(1,0) ..  controls +(0,0.5) .. +(-1,0);
\draw[fill=lightgray] (1,2) +(-1,0) -- +(0,1) -- +(1,0) ..  controls +(0,0.5) .. +(-1,0);
\draw[fill=black] (1,2) +(-1,0) -- +(0,-1) -- +(1,0) ..  controls +(0,-0.5) .. +(-1,0);
\shadedraw[top color =lightgray,bottom color=darkgray] (1,2) +(-1,0) ..  controls +(0,0.5) .. +(1,0) ..  controls +(0,-0.5) .. +(-1,0);
\draw[fill=black] (5,2) +(-1,0) -- +(0,-1) -- +(1,0) ..  controls +(0,-0.5) .. +(-1,0);
\draw[fill=black] (9,2) +(-1,0) -- +(0,-1) -- +(1,0) -- +(0,1) -- cycle;
\draw[color=gray,line width=4pt,->] (-4,2.5) -- (-6,2.5);
\draw[color=gray,line width=4pt,<-] (0,2.5) -- (-2,2.5);
\draw[color=gray,line width=4pt,->] (4,1.5) -- (2,1.5);
\draw[color=gray,line width=4pt,->] (6,1.5) -- (8,1.5);
\node[anchor=north] at (-7,1) {$\Mb$};
\node[anchor=north] at (1,1) {$\Ib$};
\node[anchor=north] at (9,1) {$\Nb$};
\end{tikzpicture}
\end{center}
\caption{Schematic representation of the deformation construction in~\cite{FullingNarcowichWald}:
globally hyperbolic spacetimes $\Mb$ and $\Nb$, whose Cauchy surfaces are related by 
an orientation preserving diffeomorphism, can be 
linked by a chain of Cauchy morphisms and an interpolating spacetime $\Ib$.}\label{fig:deform}
\end{figure}
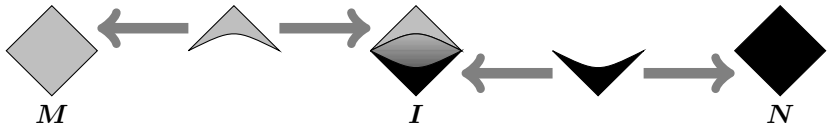

One of the innovative features of the BFV framework is its ability to
quantify the response of the theory to a perturbation in the metric. 
We write $H(\Mb)$ for the set of compactly supported smooth metric
perturbations $\hb$ on $\Mb$ that are sufficiently mild so as to modify
$\Mb$ to another globally hyperbolic spacetime $\Mb[\hb]$. In the 
schematic diagram of Fig.~\ref{fig:rce}, the metric perturbation lies
between two Cauchy surfaces. By taking a globally hyperbolic neighbourhood
of each Cauchy surface we are able to find Cauchy morphisms  $\iota^\pm$ and
$\iota^\pm[\hb]$ as shown.
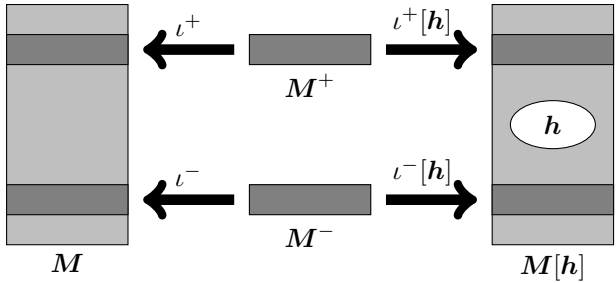
\begin{figure}
\begin{center}
\begin{tikzpicture}[scale=0.8]
\definecolor{Gold}{rgb}{.93,.82,.24}
\definecolor{Orange}{rgb}{1,0.5,0}
\draw[fill=lightgray] (-4,0) -- ++(2,0) -- ++(0,4) -- ++(-2,0) -- cycle;
\draw[fill=lightgray] (4,0) -- ++(2,0) -- ++(0,4) -- ++(-2,0) -- cycle;
\draw[fill=gray] (4,3) -- ++(2,0) -- ++(0,0.5) -- ++(-2,0) -- cycle;
\draw[fill=gray] (0,3) -- ++(2,0) -- ++(0,0.5) -- ++(-2,0) -- cycle;
\draw[fill=gray] (-4,3) -- ++(2,0) -- ++(0,0.5) -- ++(-2,0) -- cycle;
\draw[fill=gray] (4,0.5) -- ++(2,0) -- ++(0,0.5) -- ++(-2,0) -- cycle;
\draw[fill=gray] (0,0.5) -- ++(2,0) -- ++(0,0.5) -- ++(-2,0) -- cycle;
\draw[fill=gray] (-4,0.5) -- ++(2,0) -- ++(0,0.5) -- ++(-2,0) -- cycle;
\draw[color=black,line width=4pt,->] (2.25,3.25) -- (3.75,3.25) node[pos=0.4,above]{$\iota^+[\hb]$};
\draw[color=black,line width=4pt,->] (2.25,0.75) -- (3.75,0.75) node[pos=0.4,above]{$\iota^-[\hb]$};
\draw[color=black,line width=4pt,->] (-0.25,3.25) -- (-1.75,3.25) node[pos=0.5,above]{$\iota^+$};
\draw[color=black,line width=4pt,->] (-0.25,0.75) -- (-1.75,0.75) node[pos=0.5,above]{$\iota^-$};
\draw[fill=white] (5,2) ellipse (0.7 and 0.4);
\node at (5,2) {$\hb$};
\node[anchor=north] at (5,0) {$\Mb[\hb]$};
\node[anchor=north] at (-3,0) {$\Mb$};
\node[anchor=north] at (1,3) {$\Mb^+$};
\node[anchor=north] at (1,0.5) {$\Mb^-$};
\end{tikzpicture}
\end{center}
\caption{Morphisms involved in relative Cauchy evolution}\label{fig:rce}
\end{figure}
Now if $\Af$ obeys the time-slice axiom, any metric perturbation $\hb\in H(\Mb)$ defines an
automorphism of $\Af(\Mb)$,
\[
\rce_\Mb[\hb] =
\Af(\iota^-)\circ\Af(\iota^-[\hb])^{-1}\circ\Af(\iota^+[\hb])\circ\Af(\iota^+)^{-1}.
\]
BFV showed that the functional derivative of the relative Cauchy evolution defines 
a derivation on the algebra of observables which can be interpreted as a commutator
with a stress-energy tensor so that
\[
[\Ts_\Mb(\fb),A] =2i \left.\frac{d}{ds}\rce_\Mb[\hb(s)] A\right|_{s=0},
\]
where $\hb(s)$ is a smooth one-parameter family of metric perturbations in $H(\Mb)$
and $\fb=\dot{\hb}(0)$.  
In support of the interpretation of $\Ts$ as a stress-energy tensor we may cite the
fact that $\Ts$ is symmetric and conserved. Moreover, the stress-energy tensor 
in concrete models can be shown to satisfy the above relation (see BFV and~\cite{Sanders_dirac:2010} for the scalar and Dirac fields respectively) and we will see further evidence for this link below in Sect.~\ref{sect:KG}.

\section{The same physics in all spacetimes}
\subsection{The meaning of SPASs}

We can now begin to analyse the question posed at the start of this paper. 
However, is not yet clear what should be understood by saying that a theory
represents the same physics in all spacetimes (SPASs). Rather than give a direct answer
we instead posit a property that should be valid for any reasonable notion of
SPASs: If {\em two} theories individually represent the
same physics in all spacetimes, and there is {\em some} spacetime in which 
the theories coincide then they should coincide in {\em all} spacetimes.
In particular, we obtain the following necessary condition:

\begin{definition}
A class of theories $\mathfrak T$ has the SPASs property if no proper
subtheory $\Tf'$ of $\Tf$ in $\mathfrak T$ can fully account for the physics
of $\Tf$ in any single spacetime.
\end{definition}

This can be made precise by introducing a new category: the category of
locally covariant theories, $\LCT$, whose objects are functors from $\Loc$ to $\Phys$ 
and in which the morphisms are natural transformations between such functors.
Recall that a natural transformation $\zeta:\Af\nto\Bf$ between theories
$\Af$ and $\Bf$ assigns to each $\Mb$ a morphism $\Af(\Mb)\stackrel{\zeta_\Mb}{\longrightarrow}
\Bf(\Mb)$ so that for each hyperbolic embedding $\psi$, the following diagram commutes:
\[
 \begin{tikzpicture}[baseline=0 em, description/.style={fill=white,inner sep=2pt}]
\matrix (m) [ampersand replacement=\&,matrix of math nodes, row sep=3em,
column sep=2.5em, text height=1.5ex, text depth=0.25ex]
{\Mb \& \Af(\Mb) \&  \Bf(\Mb) \\
\Nb \& \Af(\Nb) \&  \Bf(\Nb)\\ };
\path[->,font=\scriptsize]
(m-1-1) edge node[left] {$\psi$} (m-2-1)
(m-1-2) edge node[auto] {$ \zeta_\Mb $} (m-1-3)
        edge node[auto,left] {$ \Af(\psi) $} (m-2-2)
(m-2-2) edge node[auto] {$ \zeta_\Nb $} (m-2-3)
(m-1-3) edge node[auto] {$ \Bf(\psi) $} (m-2-3);
\end{tikzpicture}
\]
that is, $\zeta_\Nb\circ\Af(\psi) = \Bf(\psi)\circ\zeta _\Mb$.
The interpretation is that $\zeta$ embeds $\Af$ as a sub-theory of $\Bf$.
If every $\zeta_\Mb$ is an isomorphism, $\zeta$ is an \emph{equivalence} of $\Af$ and $\Bf$;
if some $\zeta_\Mb$ is an isomorphism we will say that $\zeta$ is a \emph{partial equivalence}. 

\begin{example}\label{ex:copowers}
With $\Phys=\Alg$: given any $\Af\in\LCT$, define $\Af^{\otimes k}$ ($k\in \NN$) by
\[
\Af^{\otimes k}(\Mb) = \Af(\Mb)^{\otimes k},\qquad \Af^{\otimes k}(\psi)
= \Af(\psi)^{\otimes k}
\]
i.e., $k$ independent copies of $\Af$, where the algebraic tensor product is used. 
Then 
\begin{align*}
\eta_\Mb^{k,l}: \Af^{\otimes k}(\Mb) & \to \Af^{\otimes l}(\Mb)\\
A &\mapsto A\otimes \II_{\Af(\Mb)}^{\otimes (l-k)}
\end{align*}
defines a natural $\eta^{k,l}:\Af^{\otimes k}\nto\Af^{\otimes l}$ for
$k\le l$ and we have the obvious relations that 
$\eta^{k,m}=\eta^{l,m}\circ \eta^{k,l}$ if $k\le l\le m$. 
\end{example}

We can now formulate the issue of SPASs precisely:
\begin{definition} 
A class $\mathfrak{T}$ of theories in $\LCT$ has the SPASs property if
all partial equivalences between theories in $\mathfrak{T}$ are equivalences. 
\end{definition}
That is, if $\Af$ and $\Bf$ are theories in $\mathfrak{T}$, such that
$\Af$ is a subtheory of $\Bf$ with $\zeta:\Af\nto \Bf$ in $\LCT$,
and  $\zeta_\Mb:\Af(\Mb)\to\Bf(\Mb)$ is an isomorphism for some $\Mb$, then $\zeta$ is an equivalence.

\subsection{Failure of SPASs in $\LCT$}\label{sect:failure_of_SPASs}

Perhaps surprisingly, theories in $\LCT$ need not represent the same physics in all
spacetimes. Indeed, a large class of pathological theories may
be constructed as follows. Let us take any functor $\varphi:\Loc\to\LCT$, i.e.,
a locally covariant choice of locally covariant theory [we will give an example below]. 
Thus for each $\Mb$, $\varphi(\Mb)\in\LCT$ is a choice of theory defined in all spacetimes, 
and each hyperbolic embedding $\psi:\Mb\to\Nb$ corresponds to an embedding
$\varphi(\psi)$ of $\varphi(\Mb)$ as a sub-theory of $\varphi(\Nb)$. 

Given $\varphi$ as input, we may define a \emph{diagonal theory}
$\varphi_\Delta\in\LCT$ by setting
\[
\varphi_\Delta(\Mb) = \varphi(\Mb)(\Mb)\qquad
\varphi_\Delta(\psi) = \varphi(\psi)_{\Nb}\circ\varphi(\Mb)(\psi)
\]
for each spacetime $\Mb$ and hyperbolic embedding $\psi:\Mb\to\Nb$. The definition
of $\varphi_\Delta(\psi)$ is made more comprehensible by realising that it is
the diagonal of a natural square
\begin{center}
\begin{tikzpicture}[description/.style={fill=white,inner sep=2pt}]
\matrix (m) [ampersand replacement=\&,matrix of math nodes, row sep=3em,
column sep=2.5em, text height=1.5ex, text depth=0.25ex]
{ \Mb \& \varphi(\Mb)(\Mb) \&  \varphi(\Nb)(\Mb) \\
\Nb \&   \varphi(\Mb)(\Nb) \&  \varphi(\Nb)(\Nb)\\ };
\path[->,font=\scriptsize]
(m-1-1) edge node[auto] {$ \psi $} (m-2-1)
(m-1-2) edge node[auto] {$ \varphi(\psi)_\Mb $} (m-1-3)
        edge node[left] {$ \varphi(\Mb)(\psi) $} (m-2-2)
(m-1-3) edge node[auto] {$ \varphi(\Nb)(\psi) $} (m-2-3)
(m-2-2) edge node[below] {$ \varphi(\psi)_\Nb $} (m-2-3)
(m-1-2) edge[ultra thick] node[above,sloped] {$ \varphi_\Delta(\psi) $} (m-2-3);
\end{tikzpicture}
\end{center}
arising from the subtheory embedding $\varphi(\psi):\varphi(\Mb)\nto
\varphi(\Nb)$. 
The key point is that, while $\varphi_\Delta$ is easily shown to be a functor and therefore defines a locally covariant theory, there seems no reason to believe that $\varphi_\Delta$ should represent the same physics in all spacetimes. Nonetheless, $\varphi_\Delta$ may have many reasonable properties. For example, it 
satisfies the time-slice axiom if each $\varphi(\Mb)$ does and in addition $\varphi$ does (i.e., 
$\varphi(\psi)$ is an equivalence for every Cauchy morphism $\psi$). 

To demonstrate the existence of nontrivial functors $\varphi:\Loc\to\LCT$ [in the case
$\Phys=\Alg$] let us write $\Sigma(\Mb)$ to denote the equivalence class of the Cauchy surface
of $\Mb$ modulo orientation-preserving diffeomorphisms
and suppose that $\lambda:\Loc\to\NN=\{1,2,\ldots\}$ is constant on such equivalence classes and
obeys $\lambda(\Mb)=1$ if $\Sigma(\Mb)$ is noncompact. Defining $\mu(\Mb) = \max \{\lambda(\Cb): \text{$\Cb$ a component of $\Mb$}\}$, for example,
one can show:
\begin{prop} 
Fix any theory $\Af\in\LCT$ ($\Phys=\Alg$). Then the assignments
\[
\varphi(\Mb) = \Af^{\otimes\mu(\Mb)} \qquad
\varphi(\Mb\stackrel{\psi}{\to}\Nb) = \eta^{\mu(\Mb),\mu(\Nb)}
\]
for all $\Mb\in\Loc$ and hyperbolic embeddings $\psi:\Mb\to\Nb$
define a functor $\varphi\in\Funct(\Loc,\LCT)$, where $\eta^{k,l}$ are the natural 
transformations in Example~\ref{ex:copowers}. If $\Af$ obeys the time-slice axiom, then 
so does $\varphi_\Delta$.
\end{prop}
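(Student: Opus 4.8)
The crux is a single geometric fact about the integer $\mu(\Mb)$; once it is in hand, functoriality of $\varphi$ and the claim about the time-slice axiom follow formally from the properties of the natural transformations $\eta^{k,l}$ recorded in Example~\ref{ex:copowers}. So my plan is: (1) prove that $\mu$ is monotone along every hyperbolic embedding and constant along every Cauchy morphism; (2) deduce functoriality; (3) deduce the time-slice statement.

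\emph{Step 1 (the geometric lemma).} I would show $\mu(\Mb)\le\mu(\Nb)$ for every hyperbolic embedding $\psi:\Mb\to\Nb$, with equality when $\psi$ is Cauchy. Write $\Mb$ as a disjoint union of its (finitely many) connected components $\Mb_1,\dots,\Mb_r$; by continuity each $\psi(\Mb_i)$ lies in a single component $\Nb_{c(i)}$ of $\Nb$, and causal convexity of $\psi(\Mb)$ restricts to $\psi(\Mb_i)$, so $\psi|_{\Mb_i}:\Mb_i\to\Nb_{c(i)}$ is again a hyperbolic embedding; it therefore suffices to prove $\lambda(\Mb_i)\le\lambda(\Nb_{c(i)})$ for each $i$. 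If $\Sigma(\Mb_i)$ is noncompact this is immediate since $\lambda(\Mb_i)=1\le\lambda(\Nb_{c(i)})$. If $\Sigma(\Mb_i)$ is compact, take a Cauchy surface $\Sigma$ of $\Mb_i$ (which we may take smooth and spacelike); then $\psi(\Sigma)$ is a compact smooth spacelike hypersurface without boundary in $\Nb_{c(i)}$, and it is achronal there, because a timelike curve of $\Nb_{c(i)}$ joining two of its points would have both endpoints in the causally convex set $\psi(\Mb_i)$, hence lie inside it, contradicting achronality of the Cauchy surface $\psi(\Sigma)$ of $\Mb_i$. By the standard theorem that a compact achronal spacelike hypersurface in a globally hyperbolic spacetime is a Cauchy surface, $\psi(\Sigma)$ is a Cauchy surface of $\Nb_{c(i)}$; since $\psi$ preserves orientation and time-orientation, $\Sigma(\Nb_{c(i)})=\Sigma(\Mb_i)$ as classes modulo orientation-preserving diffeomorphism, so $\lambda(\Nb_{c(i)})=\lambda(\Mb_i)$. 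Taking the maximum over $i$ gives $\mu(\Mb)\le\mu(\Nb)$. For the Cauchy case: $\psi(\Mb)$ contains a Cauchy surface of $\Nb$, so each component $\Nb_j$ contains a Cauchy surface $S$, necessarily connected since $\Nb_j$ is, with $S\subseteq\psi(\Mb)\cap\Nb_j=\bigsqcup_{c(i)=j}\psi(\Mb_i)$; connectedness forces $S\subseteq\psi(\Mb_{i_0})$ for a single $i_0$, and a Cauchy surface of $\Nb_j$ lying in the causally convex open set $\psi(\Mb_{i_0})$ is also a Cauchy surface of $\psi(\Mb_{i_0})\cong\Mb_{i_0}$; hence $\lambda(\Nb_j)=\lambda(\Mb_{i_0})$. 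Thus every value $\lambda(\Nb_j)$ occurs among the $\lambda(\Mb_i)$, giving $\mu(\Nb)\le\mu(\Mb)$ and, with the previous inequality, equality.

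\emph{Steps 2 and 3 (functoriality and time-slice).} By Step 1, $\mu(\Mb)\le\mu(\Nb)$ for every morphism $\psi:\Mb\to\Nb$, so $\eta^{\mu(\Mb),\mu(\Nb)}$ is defined and is a morphism $\Af^{\otimes\mu(\Mb)}\nto\Af^{\otimes\mu(\Nb)}$ in $\LCT$ by Example~\ref{ex:copowers}; this makes $\varphi(\psi)$ well defined. Since $\eta^{k,k}=\id$ and $\eta^{k,m}=\eta^{l,m}\circ\eta^{k,l}$ for $k\le l\le m$, and $\mu$ is monotone along any composable pair of morphisms, one gets $\varphi(\id_\Mb)=\id_{\varphi(\Mb)}$ and $\varphi(\psi\circ\chi)=\eta^{\mu(\Lb),\mu(\Nb)}=\eta^{\mu(\Mb),\mu(\Nb)}\circ\eta^{\mu(\Lb),\mu(\Mb)}=\varphi(\psi)\circ\varphi(\chi)$ for $\chi:\Lb\to\Mb$, $\psi:\Mb\to\Nb$, so $\varphi\in\Funct(\Loc,\LCT)$. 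Now assume $\Af$ obeys the time-slice axiom. Then each $\varphi(\Mb)=\Af^{\otimes\mu(\Mb)}$ does too, since for Cauchy $\psi$ the map $\Af^{\otimes k}(\psi)=\Af(\psi)^{\otimes k}$ is the tensor power of an isomorphism, hence an isomorphism; and $\varphi$ sends Cauchy morphisms to equivalences, since then $\mu(\Mb)=\mu(\Nb)$ by Step 1 and $\varphi(\psi)=\eta^{\mu(\Mb),\mu(\Mb)}=\id$. By the sufficient condition for diagonal theories noted above (or directly, as $\varphi_\Delta(\psi)=\varphi(\psi)_\Nb\circ\varphi(\Mb)(\psi)$ is then a composite of isomorphisms), $\varphi_\Delta$ obeys the time-slice axiom.

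\emph{Main obstacle.} Everything formal is routine; the one substantive point is the claim in Step 1 that a spacetime with compact Cauchy surface cannot be hyperbolically embedded into a component with noncompact Cauchy surfaces. This is exactly where the theorem that a compact (achronal) spacelike hypersurface in a globally hyperbolic spacetime is a Cauchy surface is needed: the naive argument via inextendible causal curves fails, since such a curve in $\Nb$ may leave $\psi(\Mb)$ through its lateral boundary without ever meeting $\psi(\Sigma)$.
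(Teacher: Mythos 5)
Your proof is correct and follows essentially the same route as the paper: monotonicity of $\mu$ under hyperbolic embeddings plus the formal composition properties of the $\eta^{k,l}$, with equality of $\mu$ across Cauchy morphisms yielding the time-slice statement. The only real difference is that the paper merely cites the underlying Lorentzian-geometry fact (a hyperbolic embedding of a spacetime with compact Cauchy surface is Cauchy, with oriented-diffeomorphic Cauchy surfaces), whereas you actually prove it via the theorem that a compact achronal edgeless spacelike hypersurface in a globally hyperbolic spacetime is a Cauchy surface --- which is the right tool --- and your reduction to connected components correctly handles the finitely-many-components convention.
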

\begin{proof}
The key point is to show that $\mu$ is monotone with respect to hyperbolic embeddings:
if $\psi:\Mb\to\Nb$ then $\mu(\Mb)\le \mu(\Nb)$. This in turn follows from a result
in Lorentzian geometry which [for connected spacetimes] asserts: if $\psi:\Mb\to\Nb$ is a hyperbolic embedding and $\Mb$ has compact Cauchy surface then $\psi$ is Cauchy and the Cauchy surfaces of 
$\Nb$ are equivalent to those of $\Mb$ under an orientation preserving diffeomorphism.  
The functorial properties of $\varphi$ follow immediately from properties of $\eta^{k,l}$.
If $\Af$ obeys the time-slice axiom, then so do its powers $\Af^{\otimes k}$. Moreover, 
if $\psi$ is Cauchy, then $\Mb$ and $\Nb$ have oriented-diffeomorphic Cauchy surfaces and therefore
$\varphi(\psi)$ is an identity, and $\varphi_\Delta(\psi)=\Af^{\otimes\mu(\Mb)}(\psi)$ is an isomorphism.  
\end{proof}
As a concrete example, which we call the {\em one-field-two-field model}, if we put
\[
\lambda(\Mb) = \begin{cases} 1 & \Sigma(\Mb)~\text{noncompact}
\\ 2 & \text{otherwise} \end{cases}
\]
then the resulting diagonal theory $\varphi_\Delta$ represents one copy of the underlying theory $\Af$ in spacetimes whose Cauchy surfaces are purely noncompact (i.e., have no compact connected components), but two copies in all other spacetimes (see
Fig.~\ref{fig:onefieldtwofield}). 
Moreover, there are obvious subtheory embeddings 
\begin{equation}\label{eq:SPASs_failure}
\Af\nlto\varphi_\Delta\nlto\Af^{\otimes 2}
\end{equation}
which are isomorphisms in some spacetimes but not in
others; the left-hand in spacetimes with purely noncompact Cauchy surfaces but not otherwise, and
\emph{vice versa} for the right-hand embedding. As we have exhibited partial equivalences that are
not equivalences, we conclude that SPASs fails in $\LCT$.

\begin{figure}
\begin{center}
\begin{tikzpicture}[scale=0.8]

\draw[fill=gray,draw=gray] (-7,0.5) -- (-5.5,2) -- (-7,3.5) -- (-8.5,2) -- (-7,0.5);
\node at(-7,-1) {$\varphi_\Delta(\Db)=\Af(\Db)$};
\node at(0,-1)  {$\varphi_\Delta(\Cb)=\Af^{\otimes 2}(\Cb)$};
\node at(-3.5,-1.5) {$\varphi_\Delta(\Cb\sqcup \Db)=\Af^{\otimes 2}(\Cb\sqcup\Db)$};
\pgfsetxvec{\pgfpoint{1cm}{0}}
\pgfsetzvec{\pgfpoint{0}{1cm}}
\pgfsetyvec{\pgfpoint{0}{0.2cm}}

\pgfsetstrokecolor{gray}
\pgfsetfillcolor{gray}
\foreach \theta in {0,10,...,170}
{
\pgfpathmoveto{\pgfpointcylindrical{\theta}{2}{0}}
\pgfpathlineto{\pgfpointcylindrical{\theta+10}{2}{0}}
\pgfpathlineto{\pgfpointcylindrical{\theta+10}{2}{4}}
\pgfpathlineto{\pgfpointcylindrical{\theta}{2}{4}}
\pgfpathclose
}
\pgfusepath{fill,stroke}

\pgfsetfillcolor{lightgray}
\pgfsetstrokecolor{lightgray}
\foreach \theta in {180,190,...,350}
{
\pgfpathmoveto{\pgfpointcylindrical{\theta}{2}{0}}
\pgfpathlineto{\pgfpointcylindrical{\theta+10}{2}{0}}
\pgfpathlineto{\pgfpointcylindrical{\theta+10}{2}{4}}
\pgfpathlineto{\pgfpointcylindrical{\theta}{2}{4}}
\pgfpathclose
}
\pgfusepath{fill,stroke}


\pgfsetfillcolor{black}
\pgfsetstrokecolor{black}
\foreach \z in {0.5,2,3.5}
{
  \pgfpathmoveto{\pgfpointcylindrical{180}{2}{\z}}
  \foreach \theta in {190,200,...,360}
  {
    \pgfpathlineto{\pgfpointcylindrical{\theta}{2}{\z}}
  }
  \pgfusepath{stroke}
}

\end{tikzpicture}
\end{center}
\caption{The one-field-two-field model assigns one copy of the theory $\Af$ to
a diamond spacetime $\Db$ with noncompact Cauchy surface, but two copies to
a spacetime $\Cb$ with compact Cauchy surface, or spacetimes such as the disjoint union
$\Cb\sqcup\Db$ which have at least one component with compact Cauchy surface.}
\label{fig:onefieldtwofield}
\end{figure}
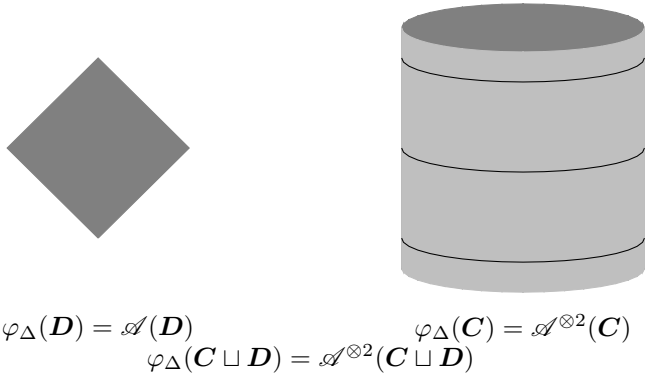
Meditating on this example, we can begin to see the root of the problem. 
If $O$ is any nonempty open globally hyperbolic subset of any $\Mb$ such that $O$ has noncompact 
Cauchy surfaces then the kinematic local algebras obey
\begin{equation}\label{eq:diagonal_kinematic}
\varphi_\Delta^\kin(\Mb;O) = \varphi_\Delta^\kin(\Mb|_O)=\Af
^\kin(\Mb;O),
\end{equation}
so the kinematic local algebras are insensitive to the ambient algebra. 
In one-field-two-field model, for example,
the kinematical algebras of relatively compact regions with nontrivial causal complement
are always embeddings of the corresponding algebra for a single field, even when the
ambient algebra corresponds to two independent fields. This leads to
a surprising diagnosis:  the framework described so far is insufficiently local,
and it is necessary to find another way of describing the local physics. 

The problem, therefore, is to detect the existence of the ambient local degrees of freedom that
are ignored in the kinematical local algebras. The solution is simple: 
where there are physical degrees of freedom, there ought to be energy. In other
words, we should turn our attention to dynamics. 

Some immediate support for the idea that dynamics has a role to play can be
obtained by computing the relative Cauchy evolution of a diagonal theory.
The result is
\begin{equation}\label{eq:rce_calc}
\rce_\Mb^{(\varphi_\Delta)}[\hb] = (\rce_\Mb^{(\varphi)}[\hb])_\Mb\circ \rce_\Mb^{(\varphi(\Mb))}[\hb],
\end{equation}
where $\rce_\Mb^{(\varphi)}[\hb]\in\Aut(\varphi(\Mb))$ is the relative Cauchy evolution of $\varphi$ in
$\LCT$. In known examples this is trivial, but it would be intriguing to find examples in which the choice of  
theory as a function of the spacetime contributes nontrivially to the total stress-energy tensor of the theory.
Provided that $\rce^{(\varphi)}$ is indeed trivial and stress-energy tensors exist, it follows that
\[
[\Ts_\Mb^{(\varphi_\Delta)}(\fb),A] = [\Ts_\Mb^{(\varphi(\Mb))}(\fb),A]
\]
for all $A\in\varphi_\Delta(\Mb)$, 
which means that the stress-energy tensor correctly detects the ambient degrees of 
freedom that are missed in the kinematic description.  

\section{Local physics from dynamics}

Let $\Af$ be a locally covariant theory obeying the time-slice axiom and $O$ be a region in spacetime $\Mb$. The {\em kinematical} description of the physical content of $O$ was given in terms of the physics assigned to $O$ when considered as a spacetime in its own right, i.e., $\Af^\kin(\Mb;O)=\Af(\Mb|_O)$. For the {\em dynamical} description, we focus on that portion of the physical content on $\Mb$ that is unaffected by the geometry in the causal complement of $O$. This can be isolated using the relative Cauchy evolution, which precisely encapsulates the response of the system under a change in the geometry.

Accordingly, for any compact set $K\subset\Mb$,  we define
$\Af^\bullet(\Mb;K)$ to be the maximal subobject of $\Af(\Mb)$
invariant under $\rce_\Mb[\hb]$ for all $\hb\in H(\Mb;K^\perp)$,
where $H(\Mb;K^\perp)$ is the set of metric perturbations in $H(\Mb)$
supported in the causal complement $K^\perp =\Mb\setminus J_\Mb(K)$  of $K$.
Subobjects of this type will exist provided the category $\Phys$ has
equalizers (at least for pairs of automorphisms) and arbitrary set-indexed intersections.
If $\Phys=\Alg$, of course, $\Af^\bullet(\Mb;K)$ is simply the invariant subalgebra. 

From the categorical perspective, it is usually not objects that
are of interest but rather the morphisms between them. Strictly speaking, a subobject of
an object $\Ac$ in a general category is an equivalence class of monomorphisms with codomain $\Ac$,
where $\alpha$ and $\alpha'$ are regarded as equivalent if there is a (necessarily unique)
isomorphism $\beta$ such that $\alpha'=\alpha\circ\beta$. In our case, we are really
interested in a morphism $\alpha^\bullet_{\Mb;K}$ with codomain $\Af(\Mb)$, characterized (up to isomorphism) by the requirements that (a)
\[
\rce_\Mb[\hb]\circ \alpha^\bullet_{\Mb;K} = \alpha^\bullet_{\Mb;K}\qquad \forall \hb\in H(\Mb;K^\perp)
\]
and (b) if $\beta$ is any other morphism with this property then there exists a unique morphism
$\hat{\beta}$ such that $\beta = \alpha^\bullet_{\Mb;K}\circ\hat{\beta}$. The notation
$\Af^\bullet(\Mb;K)$ denotes the domain of $\alpha^\bullet_{\Mb;K}$. For simplicity 
and familiarity, however, we will ignore this issue and write everything in terms of
the objects, rather than the morphisms. This is completely legitimate in categories such as
$\Alg$ or $\CAlg$, in which the objects are sets with additional structure and $\Af^\bullet(\Mb;K)$ can be concretely
constructed as the subset of $\Af(\Mb)$ left fixed by the appropriate relative Cauchy evolution automorphisms. (At the level of detail, however, it is not simply a matter of precision to work with the morphisms: universal definitions, such as that given above for $\alpha^\bullet_{\Mb;K}$, permit
efficient proofs that are `portable' between different choices of the category $\Phys$.)

The subobjects $\Af^\bullet(\Mb;K)$ have many features reminiscent of a net of local algebras in local
quantum physics: for example, isotony ($K_1\subset K_2$ implies $\Af^\bullet(\Mb;K_1)\subset
\Af^\bullet(\Mb;K_2)$) together with 
\[
\Af^\bullet(\Mb;K) = \Af^\bullet(\Mb;K^{\perp\perp}) 
\]
if $K^{\perp\perp}$ is also compact, and 
\begin{align*}
\Af^\bullet(\Mb;K_1)\vee \Af^\bullet(\Mb;K_2) &\subset \Af^\bullet(\Mb; K_1\cup
K_2) \\
\Af^\bullet(\Mb; K_1\cap K_2) &\subset \Af^\bullet(\Mb;K_1)\cap \Af^\bullet(\Mb;K_2) \\
\Af^\bullet(\Mb;\emptyset) &\subset \Af^\bullet(\Mb;K)
\end{align*}
for any compact $K,K_1,K_2$. A rather striking absence is that there is no local covariance property:
it is not true in general that a morphism $\psi:\Mb\to\Nb$ induces isomorphisms between $\Af^\bullet(\Mb;K)$ and $\Af^\bullet(\Nb;\psi(K))$ so as to make the diagram
\begin{equation}
\begin{tikzpicture}[baseline=0 em,description/.style={fill=white,inner sep=2pt}]
\matrix (m) [ampersand replacement=\&,matrix of math nodes, row sep=3em,
column sep=2.5em, text height=1.5ex, text depth=0.25ex]
{ \Af^\bullet(\Mb;K) \&  \Af^\bullet(\Nb;\psi(K)) \\
 \Af(\Mb) \&  \Af(\Nb)\\ };
\path[->,font=\scriptsize]
(m-1-1) edge node[left] {$\alpha^\bullet_{\Mb;K}$} (m-2-1)
(m-1-2) edge node[auto] {$\alpha^\bullet_{\Nb;\psi(K)}$} (m-2-2)
(m-2-1) edge node[auto] {$\Af(\psi)$} (m-2-2);
\path[->,dotted,font=\scriptsize]
(m-1-1) edge node[above]  {$\cong$} node[below] {?} (m-1-2);
\end{tikzpicture}\label{eq:bullet_covariance_diagram}
\end{equation}
commute (unless $\psi$ is itself an isomorphism). 
Again, this can be seen from the 
example of diagonal theories, for which we have
\[
\varphi_\Delta^\bullet(\Mb;K)=\varphi(\Mb)^\bullet(\Mb;K)
\]
as a consequence of \eqref{eq:rce_calc}. In the one-field-two-field model, for example,
a hyperbolic embedding of $\psi:\Db\to\Cb$ (where $\Db$ and $\Cb$ are as in Fig.~\ref{fig:onefieldtwofield}) we have $\varphi_\Delta^\bullet(\Db;K) = \Af^\bullet(\Db;K)$
but $\varphi_\Delta^\bullet(\Cb;\psi(K)) = \Af^{(\otimes 2)\bullet}(\Cb;\psi(K))$, for any compact
subset $K$ of $\Db$. In general, these subobjects will not be isomorphic in the required sense. This situation should be compared with the local covariance property enjoyed by the kinematic description, expressed by
the commuting diagram \eqref{eq:kinematic_covariance_diagram}.

The discussion so far has shown how dynamics allows us to associate a subobject $\Af^\bullet(\Mb;K)$ of $\Af(\Mb)$ with every compact subset $K$ of $\Mb$. As mentioned above, for simplicity we are taking a concrete viewpoint in which $\Af^\bullet(\Mb;K)$ is a subset of $\Af(\Mb)$. For purposes of comparison
with the kinematic viewpoint, we must also associate subobjects of $\Af(\Mb)$ with open subsets of $\Mb$. 
Given an open subset $O$ of $\Mb$, the basic idea will be to take the subobject generated by all
$\Af^\bullet(\Mb;K)$ indexed over a suitable class of compact subsets $K$ of $\Mb$. To define the class
of $K$ involved, we must introduce some terminology. 

Following~\cite{BrunettiRuzzi_topsect}, a {\em diamond} is defined to be an open relatively compact subset of $\Mb$, taking the form $D_\Mb(B)$ where $B$, the {\em base} of the diamond, is a subset of a Cauchy surface $\Sigma$ so that 
in a suitable chart $(U,\phi)$ of $\Sigma$, $\phi(B)$ is a nonempty open ball in $\RR^{n-1}$ whose closure is contained in $\phi(U)$. (A given diamond has many possible bases.) By a {\em multi-diamond}, we understand any finite union of mutually causally disjoint diamonds; the base of a multi-diamond is formed by any union of bases for each component diamond. 

Given these definitions, for any open set $O$ in $\Mb$, let $\KK(\Mb;O)$ be the set of 
compact subsets of $O$ that have a multi-diamond neighbourhood with a base contained in $O$. We then define
\[
\Af^{\rm dyn}(\Mb;O) = \bigvee_{K\in \KK(\Mb;O)} \Af^\bullet(\Mb;K)
\]
that is, the smallest $\Phys$-subobject of $\Af(\Mb)$ that contains all the $\Af^\bullet(\Mb;K)$ indexed
by $K\in \KK(\Mb;O)$. Of course, the nature of the category $\Phys$ enters here: if it is $\Alg$ (resp., 
$\CAlg$)
then $\Af^{\rm dyn}(\Mb;O)$ is the ($C$)${}^*$-subalgebra of $\Af(\Mb)$ generated by the $\Af^\bullet(\Mb;K)$; if $\Phys$ is $\Sympl$, then the linear span is formed. More abstractly,
we are employing the categorical join of subobjects, which can be given a universal definition in terms of the 
morphisms $\alpha^\bullet_{\Mb;K}$, and gives a morphism (characterized up to isomorphism)
$\alpha^\dyn_{\Mb;O}$ with codomain $\Af(\Mb)$; $\Af^\dyn(\Mb;O)$ is the domain
of some such morphism. In $\Alg$ (and more generally, if $\Phys$ has
pullbacks, and class-indexed intersections) the join has the stronger property of being
a categorical union~\cite{DikranjanTholen}, and this is used in our general setup.

It is instructive to consider these algebras for diagonal theories $\varphi_\Delta$, 
under the assumption that $\rce^{(\varphi)}$ is trivial. Using \eqref{eq:rce_calc}, it
follows almost immediately that
\[
\varphi_\Delta^\bullet(\Mb;K)=\varphi(\Mb)^\bullet(\Mb;K), \qquad
\varphi_\Delta^\dyn(\Mb;O)=\varphi(\Mb)^\dyn(\Mb;O)
\]
for all compact $K\subset \Mb$ and open $O\subset \Mb$. This should
be contrasted with the corresponding calculations for kinematic algebras
in \eqref{eq:diagonal_kinematic}; we see that the dynamical algebras
sense the ambient spacetime in a way that the kinematic algebras do not.
In the light of this example, the following definition is very natural.

\begin{definition} $\Af$ obeys \emph{dynamical locality} if
\[
\Af^{\rm dyn}(\Mb;O) \cong \Af^{\rm kin}(\Mb;O) 
\]
for all nonempty open globally hyperbolic subsets $O$ of $\Mb$ with finitely many components,
where the isomorphism should be understood as asserting the existence of isomorphisms
$\beta_{\Mb;O}$ such that $\alpha^\dyn_{\Mb;O}= \alpha^\kin_{\Mb;O}\circ
\beta_{\Mb;O}$; i.e., equivalence as subobjects of $\Af(\Mb)$. 
\end{definition}

In the next section, we will see that the class of dynamically local theories in $\LCT$ 
has the SPASs property, our main focus in this contribution. However, we note that
dynamical locality has a number of other appealing consequences, which suggest that it may
prove to be a fruitful assumption in other contexts. 

First, any dynamically local theory $\Af$ is necessarily additive, in the sense that
\[
\Af(\Mb) = \bigvee_{O} \Af^\dyn(\Mb;O),
\]
where the categorical union is taken over a sufficiently large class of 
open globally hyperbolic sets $O$; for example, the truncated multi-diamonds 
(intersections of a multi-diamond with a
globally hyperbolic neighbourhood of a Cauchy surface containing its base).

Second, the `net' $K\mapsto \Af^\bullet(\Mb;K)$ is locally covariant: if
$\psi:\Mb\to\Nb$ and $K\subset \Mb$ is {\em outer regular} then there is
a unique isomorphism making the diagram~\eqref{eq:bullet_covariance_diagram} commute.
Here, a compact set $K$ is said to be outer regular if there exist relatively compact nonempty
open globally hyperbolic subsets $O_n$ ($n\in\NN$) with finitely many components such that 
(a) $\operatorname{cl}(O_{n+1})\subset O_n$ and $K\in\KK(\Mb;O_n)$ for each $n$ and (b) $K=\bigcap_{n\in\NN} O_n$.  

Third, in the case $\Phys=\Alg$ or $\CAlg$,\footnote{One may also formulate an analogous statement for more general categories $\Phys$.} if $\Af^\bullet(\Mb;\emptyset)=\CC\II_{\Af(\Mb)}$ and $\Af$ is dynamically local then the theory obeys {\em extended locality}~\cite{Landau1969}, i.e.,  if $O_1, O_2$ are causally disjoint nonempty globally hyperbolic subsets then $\Af^\kin(\Mb;O_1)\cap \Af^\kin(\Mb;O_2)=\CC \II_{\Af(\Mb)}$. Conversely, we can infer triviality of $\Af^\bullet(\Mb;\emptyset)$ from extended locality.

Fourth, there is an interesting application to the old question of whether
there can be any preferred state in general spacetimes. Suppose that
$\Af$ is a theory in $\LCT$ with $\Phys$ taken to be $\Alg$ or $\CAlg$. 
A {\em natural state} $\omega$ of the theory is an assignment $\Mb\mapsto\omega_\Mb$
of states $\omega_\Mb$ on $\Af(\Mb)$, subject to the contravariance
requirement that $\omega_\Mb=\omega_\Nb\circ\Af(\psi)$ for
all $\psi:\Mb\to\Nb$. Hollands and Wald~\cite{Ho&Wa01} give an argument
to show that this is not possible for the free scalar field; likewise, BFV also sketch an argument to this effect 
(also phrased concretely in terms of the free field). 
The following brings these arguments to a sharper and model-independent form:
\begin{thm}\label{thm:natural_states}
Suppose $\Af$ is a dynamically local theory in $\LCT$ equipped with a natural state
$\omega$. If there is a spacetime $\Mb$ with noncompact Cauchy surfaces such that
$\omega_\Mb$ induces a faithful representation of $\Af(\Mb)$ with the Reeh--Schlieder property,
then the relative Cauchy evolution in $\Mb$ is trivial. Moreover, if extended locality holds in $\Mb$
then $\Af$ is equivalent to the trivial theory. 
\end{thm}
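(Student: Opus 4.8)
The plan is to run the argument in two stages matching the two conclusions, using the natural state only as a device for turning the relative Cauchy evolution into a unitary fixing the GNS vector.

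\emph{Stage 1: triviality of the relative Cauchy evolution.} First I would observe that the naturality condition $\omega_\Mb=\omega_\Nb\circ\Af(\psi)$, applied to the four Cauchy morphisms $\iota^\pm,\iota^\pm[\hb]$ defining $\rce_\Mb[\hb]$, forces $\omega_\Mb\circ\rce_\Mb[\hb]=\omega_\Mb$ for every $\hb\in H(\Mb)$ (a short diagram chase: the factor $\Af(\iota^-)$ turns $\omega_\Mb$ into $\omega_{\Mb^-}$, then $\Af(\iota^-[\hb])^{-1}$ into $\omega_{\Mb[\hb]}$, then $\Af(\iota^+[\hb])$ into $\omega_{\Mb^+}$, then $\Af(\iota^+)^{-1}$ back to $\omega_\Mb$). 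Passing to the GNS representation $(\HH,\pi,\Omega)$ of $\omega_\Mb$, this invariance produces a unitary $U[\hb]$ on $\HH$ with $U[\hb]\Omega=\Omega$ and $U[\hb]\pi(A)\Omega=\pi(\rce_\Mb[\hb]A)\Omega$. Next I would combine dynamical locality with the defining property of $\Af^\bullet$: if $O$ is any nonempty open globally hyperbolic subset of $\Mb$ causally disjoint from $\supp\hb$, then each $K\in\KK(\Mb;O)$ has $\supp\hb\subset K^\perp$, so $\hb\in H(\Mb;K^\perp)$ and $\rce_\Mb[\hb]$ fixes $\Af^\bullet(\Mb;K)$ pointwise; since these generate $\Af^\dyn(\Mb;O)=\Af^\kin(\Mb;O)$ (dynamical locality), $\rce_\Mb[\hb]$ is the identity on $\Af^\kin(\Mb;O)$. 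The hypothesis that $\Mb$ has noncompact Cauchy surfaces guarantees that such an $O$ exists for the compact set $\supp\hb$ (a diamond based, within a Cauchy surface, in the complement of a compact set capturing $\supp\hb$). By Reeh--Schlieder, $\pi(\Af^\kin(\Mb;O))\Omega$ is dense in $\HH$, and $U[\hb]$ fixes this dense set, so $U[\hb]=\id$; hence $\pi(\rce_\Mb[\hb]A)\Omega=\pi(A)\Omega$ for all $A$, and multiplying inside $\pi$ (using density of $\pi(\Af(\Mb))\Omega$ and faithfulness of $\pi$) gives $\rce_\Mb[\hb]=\id$. Since $\hb$ was arbitrary, the relative Cauchy evolution in $\Mb$ is trivial.

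\emph{Stage 2: global triviality under extended locality.} With all $\rce_\Mb[\hb]=\id$, every subobject of $\Af(\Mb)$ is invariant under the relevant relative Cauchy evolutions, so $\Af^\bullet(\Mb;K)=\Af(\Mb)$ for every compact $K$. Choosing two causally disjoint diamonds $D_1,D_2\subset\Mb$ (which exist, with $\KK(\Mb;D_i)$ nonempty, because the Cauchy surface of $\Mb$ is noncompact), dynamical locality gives $\Af^\kin(\Mb;D_i)\cong\Af^\dyn(\Mb;D_i)=\bigvee_{K\in\KK(\Mb;D_i)}\Af^\bullet(\Mb;K)=\Af(\Mb)$ as subobjects, i.e.\ $\Af^\kin(\Mb;D_i)=\Af(\Mb)$; extended locality in $\Mb$ then forces $\Af(\Mb)=\Af^\kin(\Mb;D_1)\cap\Af^\kin(\Mb;D_2)=\CC\II_{\Af(\Mb)}$. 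Finally, the unit maps $\CC\to\Af(\Nb)$, $z\mapsto z\II_{\Af(\Nb)}$, assemble into a subtheory embedding of the trivial theory into $\Af$ (naturality uses unitality of $\Alg$-morphisms) whose component at $\Mb$ is an isomorphism; since the trivial theory is readily checked to be dynamically local and $\Af$ is so by hypothesis, the SPASs property of the class of dynamically local theories upgrades this partial equivalence to an equivalence, so $\Af$ is equivalent to the trivial theory.

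\emph{Main obstacle.} I expect the delicate points to be concentrated in Stage 1: the Lorentzian-geometry input that a compactly supported metric perturbation on a spacetime with noncompact Cauchy surfaces always admits a nonempty open globally hyperbolic region in its causal complement carrying a multi-diamond neighbourhood (so that $\KK(\Mb;O)\neq\emptyset$ and Reeh--Schlieder applies), and the careful use of dynamical locality to transfer the invariance built into $\Af^\bullet(\Mb;K)$ onto $\Af^\kin(\Mb;O)$ — \emph{a priori} the latter carries no such invariance, which is precisely the phenomenon that motivated dynamical locality. Stage 2 is then essentially bookkeeping, the only external ingredient being the separately established SPASs property of dynamically local theories.
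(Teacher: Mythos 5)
Your proposal is correct and follows essentially the same route as the paper's proof: naturality of $\omega$ gives invariance of $\omega_\Mb$ under $\rce_\Mb[\hb]$, the GNS unitary fixes $\Omega_\Mb$ and acts trivially on a Reeh--Schlieder-dense subspace coming from a region causally disjoint from $\supp\hb$ (which exists by noncompactness of the Cauchy surfaces), faithfulness kills the relative Cauchy evolution, and then dynamical locality plus extended locality collapse $\Af(\Mb)$ to $\CC\II$, after which Theorem~\ref{thm:SPASs} applied to $\If\nto\Af$ finishes the argument. The only divergence is local: where the paper invokes $\rce_\Mb[\hb]\circ\alpha^\kin_{\Mb;O}=\alpha^\kin_{\Mb;O}$ as a general property of relative Cauchy evolution, you derive it from dynamical locality via the pointwise invariance of the generators $\Af^\bullet(\Mb;K)$, $K\in\KK(\Mb;O)$, of $\Af^\dyn(\Mb;O)$ --- a valid and, within this paper's stated apparatus, self-contained substitute.
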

By the Reeh--Schlieder property, we mean that the GNS vector $\Omega_\Mb$
corresponding to $\omega_\Mb$ is cyclic for the induced representation of
each $\Af^\kin(\Mb;O)$ for any open, globally hyperbolic, relatively compact
$O\subset \Mb$. This requirement will be satisfied if, for example, 
the theory in Minkowski space obeys standard assumptions of AQFT with the
natural state reducing to the Minkowski vacuum state.
The trivial theory $\If$ is the theory assigning the trivial unital algebra to each $\Mb$ 
and its identity morphism to any morphism in $\LCT$.
\begin{proof}
(Sketch) We first argue that $\omega_\Mb\circ\rce_\Mb[\hb] =\omega_\Mb$ for all 
$\hb\in H(\Mb)$ and all $\Mb\in\Loc$ by virtue of the natural state assumption and the definition of the
relative Cauchy evolution. In the GNS representation $\pi_\Mb$ of $\omega_\Mb$, 
the relative Cauchy evolution can be unitarily represented by unitaries leaving
the vacuum vector $\Omega_\Mb$ invariant. Choose any $\hb\in H(\Mb)$ and
an open, relatively compact, globally hyperbolic $O$ spacelike separated from $\supp \hb$.
By general properties of the relative Cauchy evolution, we have 
$\rce_\Mb[\hb]\circ\alpha^\kin_{\Mb;O}=
\alpha^\kin_{\Mb;O}$ and hence that the unitary $U_\Mb[\hb]$ implementing
$\rce_\Mb[\hb]$ is equal to the identity on the subspace $\pi_\Mb(\Af(\iota_{\Mb;O})A)\Omega_\Mb$
($A\in\Af(\Mb|_O)$) which is dense by the Reeh--Schlieder property. As $\pi_\Mb$
is faithful, the relative Cauchy evolution is trivial as claimed. 

It follows that $\Af^\bullet(\Mb;K)=\Af(\Mb)$ for all compact $K$; consequently, by
dynamical locality, $\Af^\kin(\Mb;O)=\Af^\dyn(\Mb;O)=\Af(\Mb)$ for all nonempty
open, globally hyperbolic $O$. Taking two such $O$ at spacelike separation and 
applying extended locality, we conclude that $\Af(\Mb)=\CC\II$. 
The remainder of the proof will be
given in the next section.
\end{proof}

\section{SPASs at last}

Our aim is to show that the category of dynamically local theories has the SPASs property.

\begin{thm}\label{thm:SPASs}
Suppose $\Af$ and $\Bf$ are theories in $\LCT$ obeying dynamical locality.
Then any partial equivalence $\zeta:\Af\nto\Bf$ is an equivalence. 
\end{thm}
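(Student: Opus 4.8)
The plan is to exploit the local covariance of the kinematic description together with the local covariance of the dynamical net $K\mapsto\Af^\bullet(\Mb;K)$ at outer regular compact sets, and to push the single isomorphism $\zeta_\Mb$ around until it controls all the others. First I would fix a spacetime $\Nb$ in which $\zeta_\Nb$ is an isomorphism (this exists by the partial-equivalence hypothesis). The key observation is that $\zeta$ intertwines the relative Cauchy evolutions of $\Af$ and $\Bf$: since $\zeta$ is a natural transformation, it commutes with $\Af(\psi)$ and $\Bf(\psi)$ for every Cauchy morphism appearing in the definition of $\rce$, and hence $\zeta_\Mb\circ\rce^{(\Af)}_\Mb[\hb] = \rce^{(\Bf)}_\Mb[\hb]\circ\zeta_\Mb$ for all $\hb\in H(\Mb)$. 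Consequently $\zeta_\Mb$ maps the $\rce^{(\Af)}_\Mb[\hb]$-fixed subobject into the $\rce^{(\Bf)}_\Mb[\hb]$-fixed subobject, so by the universal property defining $\alpha^\bullet$ we get induced morphisms $\Af^\bullet(\Mb;K)\to\Bf^\bullet(\Mb;K)$ compatible with the inclusions into $\Af(\Mb)$, $\Bf(\Mb)$, and likewise induced morphisms $\Af^\dyn(\Mb;O)\to\Bf^\dyn(\Mb;O)$ on the joins.

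Next I would run the deformation argument of Fig.~\ref{fig:deform}. Because $\zeta_\Nb$ is an isomorphism and $\zeta$ intertwines the functors, for any Cauchy morphism $\chi:\Nb'\to\Nb$ the square relating $\zeta_{\Nb'}$ and $\zeta_\Nb$ via $\Af(\chi)$, $\Bf(\chi)$ commutes; since $\Af$ and $\Bf$ obey the time-slice axiom, $\Af(\chi)$ and $\Bf(\chi)$ are isomorphisms, so $\zeta_{\Nb'}$ is an isomorphism too. Thus the set of spacetimes on which $\zeta$ is an isomorphism is closed under passing along Cauchy morphisms in either direction, and by Fulling--Narcowich--Wald it therefore contains every spacetime whose Cauchy surfaces are oriented-diffeomorphic to those of (a component of) $\Nb$. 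In particular $\zeta$ is an isomorphism on every sufficiently small globally hyperbolic region — e.g. on every diamond and multi-diamond $\Mb|_O$, since these have Cauchy surfaces diffeomorphic to balls (resp. disjoint unions of balls), which can be realised inside $\Nb$. Hence $\zeta_{\Mb|_O}$ is an isomorphism for $O$ any multi-diamond in any $\Mb$.

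Now I would combine this with dynamical locality to conclude. Fix an arbitrary $\Mb$. For every $K\in\KK(\Mb;O)$ with $O$ ranging over truncated multi-diamonds, naturality gives a commuting square with $\alpha^\kin_{\Mb;O}$ on one side and $\zeta_{\Mb|_O}$ — an isomorphism by the previous paragraph — across the top; chasing through the equivalence $\beta_{\Mb;O}$ furnished by dynamical locality of $\Af$ and of $\Bf$, one sees that the induced map $\Af^\dyn(\Mb;O)\to\Bf^\dyn(\Mb;O)$ is an isomorphism. Finally, by the additivity consequence of dynamical locality, $\Af(\Mb) = \bigvee_O \Af^\dyn(\Mb;O)$ and $\Bf(\Mb)=\bigvee_O\Bf^\dyn(\Mb;O)$ over such $O$; since $\zeta_\Mb$ restricts to an isomorphism on each member of the (categorical) union and is compatible with all the structure maps, it is an isomorphism $\Af(\Mb)\to\Bf(\Mb)$. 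As $\Mb$ was arbitrary, $\zeta$ is an equivalence.

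The main obstacle I anticipate is the passage from "$\zeta$ is an isomorphism on each $\Af^\dyn(\Mb;O)$" to "$\zeta_\Mb$ is an isomorphism", i.e. controlling the categorical union/join in a general category $\Phys$: one must check that a morphism between two objects which restricts to compatible isomorphisms on a generating family of subobjects is itself an isomorphism. This is transparent in $\Alg$ or $\CAlg$ (the union is generated as an algebra, so surjectivity and injectivity are immediate), but in the abstract setting it requires the categorical-union machinery (pullbacks, class-indexed intersections) invoked in the definition of $\Af^\dyn$, plus care that the isomorphisms on the pieces are mutually compatible — which they are, because they all arise by restricting the single morphism $\zeta_\Mb$. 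A secondary point needing care is verifying that every relevant multi-diamond region really does have Cauchy surfaces oriented-diffeomorphic to a region embeddable in the chosen $\Nb$; if $\Nb$ has, say, only compact Cauchy surfaces one may need to first move to a better $\Nb$ using the deformation argument, or handle components separately.
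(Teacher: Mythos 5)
Your overall architecture matches the paper's (intertwine the relative Cauchy evolutions, propagate the isomorphism property to multi-diamond spacetimes, then reassemble an arbitrary $\zeta_\Mb$ from the dynamical subobjects via additivity and the categorical union), and your first and third paragraphs are sound. The gap is in the second paragraph, and it is not the ``secondary point'' you flag at the end but the crux of the theorem. Closure under Cauchy morphisms plus the Fulling--Narcowich--Wald deformation only transports the isomorphism property among spacetimes whose Cauchy surfaces are oriented-diffeomorphic to those of the given $\Nb$. A diamond spacetime has Cauchy surfaces diffeomorphic to a ball; if $\Nb$ has compact Cauchy surfaces (or is connected while you need a multi-diamond spacetime with several components), there is \emph{no} chain of Cauchy morphisms linking the diamond to $\Nb$, and ``moving to a better $\Nb$ by deformation'' cannot help, since deformation preserves the Cauchy surface topology. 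The fact that a ball embeds into a Cauchy surface of $\Nb$ gives only a non-Cauchy hyperbolic embedding $\Db\to\Nb$, and you have nowhere shown that the isomorphism property descends along such embeddings. That descent is precisely the paper's first step: if $\zeta_\Nb$ is an isomorphism and $\psi:\Lb\to\Nb$ is \emph{any} morphism, then $\zeta_\Lb$ is an isomorphism. It is also exactly where dynamical locality is indispensable: the partial equivalence $\varphi_\Delta\nto\Af^{\otimes 2}$ of the one-field-two-field model is an isomorphism on $\Cb$ but not on a diamond spacetime hyperbolically embedded in $\Cb$, so the descent is genuinely false without dynamical locality and cannot be waved through.

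The repair uses ingredients you already assembled but deploy only in the ``ascent'' direction. Since $\zeta_\Nb$ is an isomorphism intertwining $\rce^{(\Af)}_\Nb[\hb]$ with $\rce^{(\Bf)}_\Nb[\hb]$, it restricts to an \emph{isomorphism} $\Af^\bullet(\Nb;K)\to\Bf^\bullet(\Nb;K)$ of the fixed-point subobjects (not merely a morphism), hence to an isomorphism $\Af^\dyn(\Nb;O)\to\Bf^\dyn(\Nb;O)$ of the joins; dynamical locality of \emph{both} theories converts this to an isomorphism $\Af^\kin(\Nb;\psi(\Lb))\to\Bf^\kin(\Nb;\psi(\Lb))$, and the covariance diagram \eqref{eq:kinematic_covariance_diagram} together with naturality of $\zeta$ then forces $\zeta_\Lb$ to be an isomorphism. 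With this lemma in place, your deformation step and your final generation/union argument go through essentially as in the paper.
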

\begin{proof} 
(Sketch.) The main ideas are as follows. First, we show that if $\zeta_\Nb$
is an isomorphism then $\zeta_\Lb$ is an isomorphism for every spacetime
$\Lb$ for which there is a morphism $\psi:\Lb\to\Nb$. Second, if
$\psi:\Lb\to\Nb$ is Cauchy, then $\zeta_\Lb$ is an isomorphism if and only if
$\zeta_\Nb$ is. 

As $\zeta$ is a partial equivalence, there exists a spacetime $\Mb$ for
which $\zeta_\Mb$ is an isomorphism. It follows that $\zeta_\Db$ is an isomorphism
for every multi-diamond spacetime $\Db$ that can be embedded in $\Mb$.
Using deformation arguments \cite{FullingNarcowichWald} there is a chain of 
Cauchy morphisms linking each such multi-diamond spacetime to any other
multi-diamond spacetime with the same number of connected components. Hence
$\zeta_\Db$ is an isomorphism for all multi-diamond spacetimes $\Db$.
Now consider any other spacetime $\Nb$. Owing to dynamical locality,
both $\Af(\Nb)$ and $\Bf(\Nb)$ are generated over subobjects that
correspond to diamond spacetimes, which (it transpires)
are isomorphic under the restriction of $\zeta_\Nb$. By the properties
of the categorical union, it follows that $\zeta_\Nb$ is an isomorphism.
\end{proof}
We remark that the
chain of partial equivalences in~\eqref{eq:SPASs_failure} shows that
we cannot relax the condition that {\em both} $\Af$ and $\Bf$ be dynamically local.

The foregoing result allows us to conclude the discussion of natural states:
\begin{proof}[End of the proof of Theorem~\ref{thm:natural_states}]
Note that there is a natural transformation
$\zeta:\If\nto\Af$, such that each $\zeta_\Nb$ simply embeds the trivial unital algebra in
$\Af(\Nb)$. For the spacetime $\Mb$ given in the hypotheses, we already know 
that $\Af(\Mb)=\CC\II$, so $\zeta_\Mb$ is an isomorphism. Hence
by dynamical locality and Theorem~\ref{thm:SPASs}, $\zeta$ is an equivalence
so $\Af\cong\If$. 
\end{proof}

Dynamical locality also significantly reduces our freedom to 
construct pathological theories.
\begin{thm} If $\varphi_\Delta$ is a diagonal theory (with $\rce^\varphi$
trivial) such that $\varphi_\Delta$ and every $\varphi(\Mb)$ are
dynamically local, then all $\varphi(\Mb)$ are equivalent. If the
$\varphi(\Mb)$ have trivial automorphism group then $\varphi_\Delta$ is equivalent 
to each of them.
\end{thm}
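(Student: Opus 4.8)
The plan is to exploit the two facts already established in this section: that dynamical locality pins down the diagonal theory's local structure via $\varphi_\Delta^\dyn(\Mb;O)=\varphi(\Mb)^\dyn(\Mb;O)$, and that a partial equivalence between dynamically local theories is automatically an equivalence (Theorem~\ref{thm:SPASs}). First I would fix two arbitrary spacetimes $\Mb_1$ and $\Mb_2$ and try to produce a partial equivalence between $\varphi(\Mb_1)$ and $\varphi(\Mb_2)$. The natural source of such a map is the functor $\varphi$ itself together with the deformation technology: since every spacetime contains an embedded multi-diamond spacetime $\Db$, and by the deformation argument of~\cite{FullingNarcowichWald} together with triviality of $\rce^{(\varphi)}$ the morphisms $\varphi(\psi)$ for Cauchy $\psi$ are equivalences, one can link $\varphi(\Mb_1)$ and $\varphi(\Mb_2)$ to a common $\varphi(\Db)$ through a zig-zag of subtheory embeddings that are equivalences on the relevant spacetimes.

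More concretely, the key steps in order: (i) Observe that for a multi-diamond spacetime $\Db$ embedding hyperbolically into $\Mb$ via $\psi$, the natural transformation $\varphi(\psi):\varphi(\Db)\nto\varphi(\Mb)$ is a partial equivalence provided $\psi$ is Cauchy (which happens, e.g., when $\Db$ shares a Cauchy surface with $\Mb$); here triviality of $\rce^{(\varphi)}$ is what guarantees $\varphi(\psi)$ is an isomorphism on such spacetimes, via the relative-Cauchy-evolution identity~\eqref{eq:rce_calc} and the deformation chain. (ii) Since $\varphi(\Db)$ and $\varphi(\Mb)$ are both dynamically local by hypothesis, Theorem~\ref{thm:SPASs} upgrades this partial equivalence to a genuine equivalence $\varphi(\Db)\cong\varphi(\Mb)$. (iii) Apply this to both $\Mb_1$ and $\Mb_2$ with a common multi-diamond type $\Db$ (matching the number of connected components, using the deformation result that all multi-diamond spacetimes with the same component count are Cauchy-linked), obtaining $\varphi(\Mb_1)\cong\varphi(\Db)\cong\varphi(\Mb_2)$. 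This proves the first assertion that all $\varphi(\Mb)$ are equivalent.

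For the second assertion, I would fix the equivalences $\theta^{(\Mb)}:\varphi(\Mb)\to\varphi(\Db_0)$ to a single reference multi-diamond theory and attempt to assemble them into an equivalence $\varphi_\Delta\cong\varphi(\Db_0)$ by setting $\beta_\Mb := \theta^{(\Mb)}_\Mb:\varphi_\Delta(\Mb)=\varphi(\Mb)(\Mb)\to\varphi(\Db_0)(\Mb)$. Each $\beta_\Mb$ is an isomorphism; the content is naturality, i.e.\ that $\beta_\Nb\circ\varphi_\Delta(\psi)=\varphi(\Db_0)(\psi)\circ\beta_\Mb$ for every hyperbolic embedding $\psi:\Mb\to\Nb$. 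Unwinding $\varphi_\Delta(\psi)=\varphi(\psi)_\Nb\circ\varphi(\Mb)(\psi)$ and the naturality squares for the $\theta$'s, this reduces to a compatibility between the chosen equivalences $\theta^{(\Mb)}$ and $\theta^{(\Nb)}$ relative to the subtheory embedding $\varphi(\psi):\varphi(\Mb)\nto\varphi(\Nb)$ — essentially that $\theta^{(\Nb)}\circ\varphi(\psi)$ and $\theta^{(\Mb)}$ agree as equivalences $\varphi(\Mb)\to\varphi(\Db_0)$. This need not hold for arbitrary choices, but when the automorphism groups of the $\varphi(\Mb)$ (equivalently of $\varphi(\Db_0)$) are trivial, any two equivalences between the same pair of theories coincide, so the required compatibility is automatic and naturality follows.

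The main obstacle I anticipate is step (i): verifying carefully that triviality of $\rce^{(\varphi)}$, combined with the deformation chain of Cauchy morphisms, really does force $\varphi(\psi)$ to be an equivalence on multi-diamond spacetimes — in other words, that the relative Cauchy evolution of $\varphi$ in $\LCT$ being trivial propagates, via~\eqref{eq:rce_calc} and the time-slice structure, to the statement that $\varphi$ assigns isomorphisms to Cauchy morphisms. One must also be careful that ``equivalence'' here is of functors in $\LCT$, so the deformation argument has to be run at the level of natural transformations rather than single algebra morphisms, tracking that the isomorphism property holds spacetime-by-spacetime. Once that propagation is secured, the remaining steps are straightforward applications of Theorem~\ref{thm:SPASs} and elementary diagram chasing.
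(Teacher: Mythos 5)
There is a genuine gap in the first part. Your argument for ``all $\varphi(\Mb)$ are equivalent'' never actually uses dynamical locality of $\varphi_\Delta$ itself --- only dynamical locality of the individual theories $\varphi(\Mb)$ (to invoke Theorem~\ref{thm:SPASs}) plus the time-slice/Cauchy-morphism technology. That hypothesis is not decorative: in the one-field-two-field model built on a massive scalar field, every $\varphi(\Mb)$ is $\Af$ or $\Af^{\otimes 2}$ (each dynamically local) and $\rce^{(\varphi)}$ is trivial, yet $\varphi(\Db)=\Af\not\cong\Af^{\otimes 2}=\varphi(\Cb)$; only dynamical locality of $\varphi_\Delta$ fails there. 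So any proof that omits it cannot be correct. Concretely, the failure is in steps (i) and (iii): your partial equivalences $\varphi(\psi)$ are obtained only for \emph{Cauchy} $\psi$, and the deformation argument of~\cite{FullingNarcowichWald} links only spacetimes with oriented-diffeomorphic Cauchy surfaces. A multi-diamond spacetime cannot embed as a Cauchy subspacetime of a spacetime with compact Cauchy surfaces, so your zig-zag never crosses between the two Cauchy-surface classes --- which is exactly where the counterexample sits. (A smaller point: triviality of $\rce^{(\varphi)}$ does not \emph{imply} that $\varphi(\psi)$ is an equivalence for Cauchy $\psi$; that time-slice property of $\varphi$ is presupposed in order for $\rce^{(\varphi)}$ to be defined at all.)

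The missing idea is that dynamical locality of $\varphi_\Delta$ handles \emph{arbitrary} morphisms, not just Cauchy ones. For any $\psi:\Mb\to\Nb$, dynamical locality of $\varphi_\Delta$ and of $\varphi(\Nb)$, combined with $\varphi_\Delta^\dyn(\Nb;\psi(\Mb))=\varphi(\Nb)^\dyn(\Nb;\psi(\Mb))$ from \eqref{eq:rce_calc}, forces the kinematic subobjects $\varphi_\Delta^\kin(\Nb;\psi(\Mb))$ and $\varphi(\Nb)^\kin(\Nb;\psi(\Mb))$ of $\varphi(\Nb)(\Nb)$ to coincide; unwinding this through the naturality square defining $\varphi_\Delta(\psi)=\varphi(\psi)_\Nb\circ\varphi(\Mb)(\psi)=\varphi(\Nb)(\psi)\circ\varphi(\psi)_\Mb$ shows that $\varphi(\psi)_\Mb$ is an isomorphism, whence $\varphi(\psi)$ is a partial equivalence and, by Theorem~\ref{thm:SPASs}, an equivalence. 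Since any two spacetimes in $\Loc$ are connected by a chain of (not necessarily composable) morphisms --- e.g.\ both receive a diamond --- all $\varphi(\Mb)$ are equivalent. Your treatment of the second assertion (triviality of the automorphism groups removes the incompatibility between the chosen reference equivalences and so yields naturality of the assembled isomorphisms) is essentially the paper's argument and is fine once the first part is repaired.
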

\begin{proof} (Sketch) Consider any morphism $\psi:\Mb\to\Nb$ in $\Loc$. 
Using dynamical locality of $\varphi_\Delta$ and $\varphi(\Nb)$, we may deduce that $\varphi(\psi)_\Mb$ is an isomorphism and hence $\varphi(\psi)$ is an equivalence. As any two spacetimes in $\Loc$
can be connected by a chain of (not necessarily composable) morphisms, it follows that 
all the $\varphi(\Mb)$ are equivalent. 

In particular, writing $\Mb_0$ for Minkowski space, the previous argument allows us to choose
 an equivalence $\zeta_{(\Mb)}:\varphi(\Mb_0)\nto \varphi(\Mb)$ for each spacetime $\Mb$ (no uniqueness is assumed). Every morphism $\psi:\Mb\to\Nb$ then induces an automorphism
$\eta(\psi) = \zeta_{(\Mb)}^{-1}\circ\varphi(\psi)\circ\zeta_{(\Nb)}$ of $\varphi(\Mb_0)$. 
As $\Aut(\varphi(\Mb_0))$ is assumed trivial, it follows that $\varphi(\psi) = \zeta_{(\Mb)}\circ\zeta_{(\Nb)}^{-1}$ 
and it is easy to deduce that there is an equivalence $\zeta:\varphi(\Mb_0)\nto\varphi_\Delta$
with components $(\zeta_{(\Mb)})_\Mb$. 
\end{proof}
We remark that the automorphism group of a theory in $\LCT$ can be interpreted as its group of 
global gauge invariances~\cite{Fewster_in_prep} and that a theory in which the $\Af(\Mb)$
are algebras of observables will have trivial gauge group. The argument just given has a 
cohomological flavour, which can also be made precise and indicates that a cohomological study of
$\Loc$ is worthy of further study. 

%
%
%
%

\section{Example: Klein--Gordon theory}\label{sect:KG}

The abstract considerations of previous sections would be of questionable relevance if
they were not satisfied in concrete models. We consider the standard example of the
free scalar field, discussed above.

First let us consider the classical theory. Let $\Sympl$ be the category of
real symplectic spaces with symplectomorphisms as the morphisms. To each $\Mb$, we assign the space $\Lf(\Mb)$ of smooth, real-valued solutions
to $(\Box+m^2)\phi=0$, with compact support on 
Cauchy surfaces; we endow $\Lf(\Mb)$ with the standard (weakly nondegenerate) symplectic product
\[
\sigma_\Mb(\phi,\phi') = \int_\Sigma (\phi n^a\nabla_a\phi' - \phi' n^a\nabla_a\phi) 
d\Sigma
\]
for any Cauchy surface $\Sigma$. 
To each hyperbolic embedding $\psi:\Mb\to\Nb$ there is a symplectic map 
$\Lf(\psi):\Lf(\Mb)\to\Lf(\Nb)$ so that $E_\Nb\psi_*f = \Lf(\psi)E_\Mb f$
for all $f\in\CoinX{\Mb}$ (see BFV or~\cite{BarGinouxPfaffle}).

The relative Cauchy evolution for $\Lf$ was computed in BFV (their Eq.~(15))
as was its functional derivative with respect to metric perturbations (see pp.~61-62 of BFV).
An important observation is that this can be put into a nicer form: 
writing
\[
F_\Mb[f]\phi = \left.\frac{d}{ds}\rce^{(\Lf)}_\Mb[f]\phi \right|_{s=0}
\]
it turns out that
\begin{equation}\label{eq:magic_formula}
\sigma_\Mb(F_\Mb[f]\phi,\phi) = \int_\Mb f_{ab} T^{ab}[\phi]\,\dvol_\Mb,
\end{equation}
where 
\[
T_{ab}[\phi] = \nabla_a\phi\nabla_b\phi -\frac{1}{2}g_{ab}\nabla^c\phi\nabla_c\phi +\frac{1}{2}m^2 g_{ab}\phi^2
\] 
is the classical stress-energy tensor of the solution $\phi$.
(In passing, we note that \eqref{eq:magic_formula} provides an explanation, for the free scalar field and similar theories, as to why the relative Cauchy evolution can be regarded as equivalent to the specification of the classical action.)

We may use these results to compute $\Lf^\bullet(\Mb;K)$ and $\Lf^\dyn(\Mb;O)$:
they consist of solutions whose stress-energy tensors vanish
in $K^\perp$ (resp., $O'=\Mb\setminus\operatorname{cl} J_\Mb(O)$).  
For nonzero mass, the solution must vanish wherever the stress-energy does and
we may conclude that $\Lf^\dyn(\Mb;O)=\Lf^\kin(\Mb;O)$, i.e., we have
dynamical locality of the classical theory $\Lf$. 

At zero mass, however, we can deduce only that the solution is constant in regions where its stress-energy tensor vanishes, so
$\Lf^\bullet(\Mb;K)$ consists of solutions in $\Lf(\Mb)$ that are
constant on each connected component of $K^\perp$. In 
particular, if $K^\perp$ is connected and $\Mb$ has noncompact
Cauchy surfaces this forces the solutions to vanish in $K^\perp$
as in the massive case. However, if $\Mb$ has compact Cauchy 
surfaces this argument does not apply and indeed the constant
solution $\phi\equiv 1$ belongs to
every $\Lf^\bullet(\Mb;K)$ and $\Lf^\dyn(\Mb;O)$, but it does not belong to $\Lf^\kin(\Mb;O)$
unless $O$ contains a Cauchy surface of $\Mb$. Hence the massless theory fails to be dynamically local. 
The source of this problem is easily identified: it arises from the
global gauge symmetry $\phi\mapsto\phi+\text{const}$ in the classical action.

At the level of the quantized theory, one may show that similar results hold: the $m>0$
theory is dynamically local, while the $m=0$ theory is not. We argue that this should
be taken seriously as indicating a (fairly mild) pathology of the massless minimally coupled scalar field,
rather than a limitation of dynamical locality. In support of this position we note:
\begin{itemize}
\item Taking the gauge symmetry seriously, we can alternatively quantize the theory of
currents $j=d\phi$; this turns out to be a well-defined locally covariant and dynamically local theory
in dimensions $n>2$. While dynamical locality fails for this model in $n=2$ dimensions in the present setting, it may be restored by restricting the scope of the theory to connected spacetimes. 
\item The constant solution is also the source of another well-known problem: there is
no ground state for the theory in ultrastatic spacetimes with compact spatial section
(see, for example~\cite{FullRuij87}). The
same problem afflicts the massless scalar field in two-dimensional Minkowski space, where it
is commonplace to reject the algebra of fields in favour of the algebra of currents.
\item The nonminimally coupled scalar field (which does not have the gauge symmetry)
is dynamically local even at zero mass (a result due to Ferguson~\cite{Ferg_in_prep}).
\end{itemize}
Actually, this symmetry has other interesting aspects: it is spontaneously broken in 
Minkowski space~\cite{Streater_broken}, for example, and the automorphism
group of the functor $\Af$ is noncompact: $\Aut(\Af) =\ZZ_2\ltimes \RR$~\cite{Fewster_in_prep}.

\section{Summary and outlook}

We have shown that the notion of the `same physics in all spacetimes' can be given a formal meaning (at least in part)
and can be analysed in the context of the BFV framework of locally covariant theories. While
local covariance in itself does not guarantee the SPASs property, the dynamically local theories
do form a class of theories with SPASs; moreover, dynamical locality seems to be a natural
and useful property in other contexts. Relative Cauchy evolution enters the discussion in an
essential way, and seems to be the replacement of the classical action in the axiomatic setting. A key question, given our starting point,
is the extent to which the SPASs condition is sufficient as well as necessary for a class of theories to represent the same physics in all spacetimes. As a class of theories that had no subtheory embeddings
other than equivalences would satisfy SPASs, there is clearly scope
for further work on this issue. In particular, is it possible to 
formulate a notion of `the same physics on all spacetimes' in terms
of individual theories rather than classes of theories? 

In closing, we remark that the categorical framework opens a completely new way of analysing quantum
field theories, namely at the functorial level. It is conceivable that all structural properties of QFT should
have a formulation at this level, with the instantiations of the theory in particular spacetimes
taking a secondary place. Lest this be seen as a flight to abstraction, we emphasize again that
this framework is currently leading to new viewpoints and concrete calculations in cosmology and elsewhere. This provides all the more reason to understand why theories based on our experience with
terrestrial particle physics can be used in very different spacetime environments while preserving
the same physical content.


\begin{thebibliography}{10}
\providecommand{\url}[1]{{#1}}
\providecommand{\urlprefix}{URL }
\expandafter\ifx\csname urlstyle\endcsname\relax
  \providecommand{\doi}[1]{DOI~\discretionary{}{}{}#1}\else
  \providecommand{\doi}{DOI~\discretionary{}{}{}\begingroup
  \urlstyle{rm}\Url}\fi

\bibitem{AdamekHerrlichStrecker}
Ad{\'a}mek, J., Herrlich, H., Strecker, G.E.: Abstract and concrete categories:
  the joy of cats.
\newblock Repr. Theory Appl. Categ. (17), 1--507 (2006), reprint of the 1990
  original [Wiley, New York]

\bibitem{BarGinouxPfaffle}
B{\"{a}}r, C., Ginoux, N., Pf{\"{a}}ffle, F.: Wave equations on Lorentzian
  manifolds and quantization.
\newblock European Mathematical Society (EMS), Z{\"{u}}rich (2007)

\bibitem{Bernal:2006xf}
Bernal, A.N., S{\'{a}}nchez, M.: {Globally hyperbolic spacetimes can be defined
  as causal instead of strongly causal}.
\newblock Class. Quant. Grav. \textbf{24}, 745--750 (2007), gr-qc/0611138

\bibitem{BrFr2000}
Brunetti, R., Fredenhagen, K.: Microlocal analysis and interacting quantum
  field theories: {R}enormalization on physical backgrounds.
\newblock Comm. Math. Phys. \textbf{208}(3), 623--661 (2000)

\bibitem{BrFrVe03}
Brunetti, R., Fredenhagen, K., Verch, R.: The generally covariant locality
  principle: A new paradigm for local quantum physics.
\newblock Commun. Math. Phys. \textbf{237}, 31--68 (2003)

\bibitem{Br&Ru05}
Brunetti, R., Ruzzi, G.: Superselection sectors and general covariance. {I}.
\newblock Commun. Math. Phys. \textbf{270}, 69--108 (2007)

\bibitem{BrunettiRuzzi_topsect}
Brunetti, R., Ruzzi, G.: Quantum charges and spacetime topology: {T}he
  emergence of new superselection sectors.
\newblock Comm. Math. Phys. \textbf{287}(2), 523--563 (2009)

\bibitem{DapFrePin2008}
Dappiaggi, C., Fredenhagen, K., Pinamonti, N.: {Stable cosmological models
  driven by a free quantum scalar field}.
\newblock Phys. Rev. \textbf{D77}, {104}{015} (2008)

\bibitem{DegVer2010}
Degner, A., Verch, R.: Cosmological particle creation in states of low energy.
\newblock J. Math. Phys. \textbf{51}(2), {022}{302} (2010)

\bibitem{DikranjanTholen}
Dikranjan, D., Tholen, W.: Categorical structure of closure operators,
  \emph{Mathematics and its Applications}, vol. 346.
\newblock Kluwer Academic Publishers Group, Dordrecht (1995)

\bibitem{Dimock1980}
Dimock, J.: Algebras of local observables on a manifold.
\newblock Commun. Math. Phys. \textbf{77}, 219--228 (1980)

\bibitem{Ferg_in_prep}
Ferguson, M.: In preparation

\bibitem{Fewster_in_prep}
Fewster, C.J.: In preparation

\bibitem{Fewster2007}
Fewster, C.J.: Quantum energy inequalities and local covariance. {II}.
  {C}ategorical formulation.
\newblock Gen. Relativity Gravitation \textbf{39}(11), 1855--1890 (2007)

\bibitem{Few&Pfen06}
Fewster, C.J., Pfenning, M.J.: Quantum energy inequalities and local
  covariance. {I}: Globally hyperbolic spacetimes.
\newblock J. Math. Phys. \textbf{47}, {082}{303} (2006)

\bibitem{FewVer:dynloc}
Fewster, C.J., Verch, R.: Dynamical locality; {D}ynamical locality for the free
  scalar field.
\newblock In preparation

\bibitem{Fulling73}
Fulling, S.A.: Nonuniqueness of canonical field quantization in {R}iemannian
  space-time.
\newblock Phys. Rev. \textbf{D7}, 2850--2862 (1973)

\bibitem{FullingNarcowichWald}
Fulling, S.A., Narcowich, F.J., Wald, R.M.: Singularity structure of the
  two-point function in quantum field theory in curved spacetime. {II}.
\newblock Ann. Physics \textbf{136}(2), 243--272 (1981)

\bibitem{FullRuij87}
Fulling, S.A., Ruijsenaars, S.N.M.: Temperature, periodicity and horizons.
\newblock Phys. Rept. \textbf{152}, 135--176 (1987)

\bibitem{Haag}
Haag, R.: Local Quantum Physics: Fields, Particles, Algebras.
\newblock Springer-Verlag, Berlin (1992)

\bibitem{HawkingEllis}
Hawking, S.W., Ellis, G.F.R.: The Large Scale Structure of Space-Time.
\newblock Cambridge University Press, London (1973)

\bibitem{Ho&Wa01}
Hollands, S., Wald, R.M.: Local {W}ick polynomials and time ordered products of
  quantum fields in curved spacetime.
\newblock Commun. Math. Phys. \textbf{223}, 289--326 (2001)

\bibitem{Ho&Wa02}
Hollands, S., Wald, R.M.: Existence of local covariant time ordered products of
  quantum fields in curved spacetime.
\newblock Commun. Math. Phys. \textbf{231}, 309--345 (2002)

\bibitem{Kay79}
Kay, B.S.: Casimir effect in quantum field theory.
\newblock Phys. Rev. \textbf{D20}, 3052--3062 (1979)

\bibitem{Landau1969}
Landau, L.J.: A note on extended locality.
\newblock Comm. Math. Phys. \textbf{13}, 246--253 (1969)

\bibitem{MacLane}
Mac~Lane, S.: Categories for the Working Mathematician, 2nd edn.
\newblock Springer-Verlag, New York (1998)

\bibitem{Sanders_ReehSchlieder}
Sanders, K.: On the {R}eeh-{S}chlieder property in curved spacetime.
\newblock Comm. Math. Phys. \textbf{288}(1), 271--285 (2009)

\bibitem{Sanders_dirac:2010}
Sanders, K.: The locally covariant {D}irac field.
\newblock Rev. Math. Phys. \textbf{22}(4), 381--430 (2010)

\bibitem{Streater_broken}
Streater, R.F.: Spontaneous breakdown of symmetry in axiomatic theory.
\newblock Proc. Roy. Soc. Ser. A \textbf{287}, 510--518 (1965)

\bibitem{StreaterWightman}
Streater, R.F., Wightman, A.S.: P{CT}, spin and statistics, and all that.
\newblock Princeton Landmarks in Physics. Princeton University Press,
  Princeton, NJ (2000).
\newblock Corrected third printing of the 1978 edition

\bibitem{Verch01}
Verch, R.: A spin-statistics theorem for quantum fields on curved spacetime
  manifolds in a generally covariant framework.
\newblock Commun. Math. Phys. \textbf{223}, 261--288 (2001)

\bibitem{VerchRegensburg}
Verch, R.: Local covariance, renormalization ambiguity, and local thermal
  equilibrium in cosmology (2011).
\newblock These proceeedings and arXiv:1105.6249.

\end{thebibliography}
\end{document}